\newcommand{\quotes}[1]{``#1''}
\newcommand{\x}{X}
\newcommand{\al}[1]{\begin{aligned}#1\end{aligned}}
\newcommand{\ub}{\underbrace}
\newcommand{\ol}{\overline}
\newcommand{\tld}{\widetilde}
\newcommand{\f}{\frac}
\newcommand{\sse}{\subseteq}
\newcommand{\ce}{\coloneqq}
\DeclarePairedDelimiter{\set}{\{}{\}}
\newcommand{\SET}[1]{\set*{#1}}
\DeclarePairedDelimiter{\p}{(}{)}
\renewcommand{\P}[1]{\p*{#1}}
\DeclarePairedDelimiter{\abs}{|}{|}
\newcommand{\ABS}[1]{\abs*{#1}}
\newcommand{\INTERNALCOND}[2]{#1\,\middle|\,#2}
\newcommand{\SETCO}[2]{\!\SET{\INTERNALCOND{#1}{#2}}}
\newcommand{\poly}{\mathrm{poly}}
\newcommand{\N}{\mathbb{N}}
\newcommand{\R}{\mathbb{R}}
\renewcommand{\and}{\wedge}
\newcommand{\union}{\cup}
\DeclareMathOperator{\W}{\mathbf{W}}
\renewcommand{\th}{^{\mathrm{th}}}
\newcommand{\calT}{\mathcal{T}}
\newcommand{\Tout}{\calT_\mathrm{out}}
\newcommand{\switches}{\#\mathrm{switches}}
\newcommand{\mix}{\mathrm{mix}}
\newcommand{\mixValue}{\mathrm{mixValue}}
\newcommand{\ttL}{\mathtt{L}}
\newcommand{\ttR}{\mathtt{R}}
\newcommand{\TL}{{\cT_\ttL}}
\newcommand{\TR}{{\cT_\ttR}}
\newcommand{\PL}{{P_\ttL}}
\newcommand{\PR}{{P_\ttR}}
\newcommand{\FL}{{F_\ttL(P,p)}}
\newcommand{\FR}{{F_\ttR(P,p)}}
\newcommand{\OPT}{\mathsf{OPT}}
\newcommand{\cGB}{\mathsf{cGB}}
\newcommand{\ALT}{\mathsf{Alt}}
\newcommand{\Alt}{\ALT}
\renewcommand{\W}{\mathsf{Funnel}}
\newcommand{\Funnel}{\W}
\newcommand{\spl}{\mathrm{split}}
\newcommand{\rect}[1]{\square #1}
\newcommand{\cC}{\mathcal{C}}
\newcommand{\cT}{\mathcal{T}}
\newcommand{\FB}{funnel bound}
\newcommand{\IB}{alternation bound}
\tikzstyle{dot} = [circle, fill, minimum size=.1cm, inner sep=0]
\tikzset{cross/.style={cross out, draw, 
         minimum size=2*(#1-\pgflinewidth), 
         inner sep=0pt, outer sep=0pt},
         cross/.default={.9mm}}
\tikzset{every picture/.style={thick, font=\small, scale=0.8}}
\title{Hardness Amplification for Dynamic Binary Search Trees} %
\author{Shunhua	Jiang}{Columbia University, NY, USA}{	gingerbell00@gmail.com}{https://orcid.org/0000-0003-2226-7980}{}
\author{Victor	Lecomte}{Stanford University, CA, USA}{vc.lecomte@gmail.com}{https://orcid.org/0000-0002-6585-6980}{}
\author{Omri	Weinstein}{The Hebrew University of Jerusalem, Israel}{omri.weins@gmail.com}{https://orcid.org/0000-0002-9357-2299}{}
\author{Sorrachai	Yingchareonthawornchai}{The Hebrew University of Jerusalem, Israel}{sorrachai.cp@gmail.com}{https://orcid.org/0000-0002-7169-0163}{}
\keywords{Data Structures, Amortized Analysis}
\authorrunning{S. Jiang, V. Lecomte, O. Weinstein, and S. Yingchareonthawornchai}
\date{}
\begin{document}

\maketitle

\begin{abstract}

We prove direct-sum theorems for Wilber's two lower bounds [Wilber, FOCS'86] on the cost of access sequences in the binary search tree (BST) model. These bounds are central to the question of dynamic optimality~[Sleator and Tarjan, JACM'85]: the \emph{Alternation} bound is the only bound to have yielded online BST algorithms beating $\log n$ competitive ratio, while the \emph{Funnel} bound has repeatedly been conjectured to exactly characterize the cost of executing an access sequence using the optimal tree~[Wilber, FOCS'86, Kozma'16], and has been explicitly linked to splay trees~[Levy and Tarjan, SODA'19]. Previously,  the direct-sum theorem for the Alternation bound was known only when approximation was allowed~[Chalermsook, Chuzhoy and Saranurak, APPROX'20, ToC'24]. %
We use these direct-sum theorems to amplify the sequences from [Lecomte and Weinstein, ESA'20] that separate between Wilber’s Alternation and Funnel bounds, increasing the Alternation and Funnel bounds while optimally maintaining the separation.  
As a corollary, we show that Tango trees [Demaine et al., FOCS'04] are optimal among any BST algorithms that charge their costs to the Alternation bound. This is true for \emph{any} value of the Alternation bound, even values for which Tango trees achieve a competitive ratio of $o(\log \log n)$ instead of the default $O(\log \log n)$.  Previously,  the optimality of Tango trees was shown only for a limited range of Alternation bound~[Lecomte and Weinstein, ESA'20]. 

\end{abstract}


\section{Introduction}

Direct Sum theorems assert a lower bound on a certain complexity measure $\cC$ of a \emph{composed}\footnote{Formally speaking, direct-sum problems pertain to the complexity of solving $k$ separate copies of a problem $f$, rather than computing a composed function of $k$ copies $g(f(x_1),\ldots,f(x_k))$, but it is common to refer to both variations of the $k$-fold problem as direct-sums \cite{KRW95}.} problem $f\circ g$ in terms of the individual complexities of $f$ and $g$, ideally of the form 
$\cC(f\circ g) \approx \cC(f)+\cC(g)$. 
Direct Sums have a long history in complexity theory, as they provide a \emph{black-box} technique for amplifying the hardness of computational problems  $\cC(f^{\circ k}) \gtrsim k\cdot \cC(f)$, and are the most promising approach for proving several holy-grail lower bounds in complexity theory \cite{KRW95, GNW11, RM99, BBCR09, KLM021}. Moreover, a ``tensorization'' property of $\cC$ under composition allows to ``lift'' the problem and leverage its asymptotic behavior (e.g., concentration), which is not present in the single-copy problem --  this feature has been  demonstrated and exploited in various models, including combinatorial Discrepancy \cite{LSS08, Sherstov12}, Richness of data structure problems \cite{PT09}, decision trees \cite{RM99} and rank \cite{KLM021} to mention a few. 
Despite their powerful implications,  (strong) direct-sum scaling of composed problems are often simply false \cite{Raz11, Shitov18, Sha01_counterex_DP}, and highly depend on the underlying computational model. 

In this paper, we study direct sums in the \emph{online BST model}, motivated by the \emph{dynamic optimality} conjecture of Sleator and Tarjan \cite{ST85}. 
The dynamic optimality conjecture postulates the existence of 
an \emph{instance optimal} binary search tree algorithm (BST), namely, an online self-adjusting BST
whose running time\footnote{i.e. the number of pointer movements and tree rotations performed  by the BST} 
matches the best possible running time \emph{in hindsight} for any sufficiently long sequence of queries.
More formally, denoting by $\calT(\x)$ the operational time of a BST algorithm $\calT$ on an access sequence $\x= (x_1,\ldots,x_m) \in [n]^m$ 
of keys to be searched, the conjecture says that there is an online BST $\calT$ such that
$\forall\x$, $\calT(\x) \leq O(\OPT(\x))$, where $\OPT(\x) := \min_{\calT'} \calT'(\x)$ denotes the optimal 
offline cost for $X$.
In their seminal paper, Sleator and Tarjan \cite{ST85} conjectured that 
\emph{splay trees} are $O(1)$-competitive;  
A more recent competitor, 
the \emph{GreedyFuture} algorithm \cite{Luc88,DHIKP09,Munro00}, also forms a 
compelling candidate for constant-competitive dynamic optimality.  
However, the near-optimality of both Splay trees and GreedyFuture was proven only in special cases~\cite{Sundar92,Elmasry04,pettie2007splay,ChalermsookG0MS15,CGKMS15,GoyalG19,ChalermsookGJAP23,CPY24}, 
 and they are not known to be $o(\log n)$-competitive for general access sequences $\x$ 
(note that every balanced BST is trivially $O(\log n)$-competitive). After 35 years of active research, the best provable bound to date is an $O(\log\log n)$-competitive BST, starting with \emph{Tango trees}~\cite{DHIP07}, among other $O(\log\log n)$-competitive BST algorithms~\cite{BDDF10,Multisplay,ChalermsookCS20}. Interestingly, this progress was made possible due the development of \emph{lower bounds} in the BST model, as we discuss next. 

Indeed, a remarkable feature of the BST model---absent from general computational models (e.g., word-RAM)---is that it allows for nontrivial lower bounds on the search time of a \emph{fixed} query sequence $\x$: In general models, lower bounds against a specific input $\x$ do not make much sense as the best algorithm in hindsight can simply ``store and read-off the answer'' for $\x$.   
Nevertheless, in the BST model, even an all-knowing binary search tree must pay the cost of traversing the root-to-leaf path to retrieve keys. 
For example, there are classical examples of deterministic access sequences (e.g., \textit{bit-reversal} sequence~\cite{Wil89}) that require the worst case $\Omega(m\log n)$ total search time. This feature is what makes instance-optimality in the BST model an intriguing possibility. Our work focuses on two classic lower bounds due to Wilber~\cite{Wil89}, the Alternation and Funnel bounds (a.k.a, Wilber's first and second bounds), which are central to the aforementioned developments. 

\textbf{The Alternation and Funnel Bounds.} 
Essentially all BST lower bounds are derived from a natural geometric interpretation of the access sequence $X = (X_1,\ldots, X_m)$
as a point set on the plane, mapping the $i\th$ access $X_i$ to point $(X_i,i)$ (\cite{DHIKP09, IaconoSurvey}, 
see Figure~\ref{fig:geom-view}). The earliest lower bounds on $\OPT(\x)$ were proposed in an influential paper of Wilber \cite{Wil89}. The \emph{\IB} $\ALT_\calT(\x)$  
counts the total number of left/right alternations obtained by searching the keys $\x=(X_1,\ldots, X_m)$ 
on a \emph{fixed} (static) binary search tree $\calT$, where alternations are summed up over all nodes 
$v\in \calT$ of the ``reference tree'' $\calT$ (see Figure \ref{fig:alt} and the formal definition in Section~\ref{sec:prelims}). 
Thus, the Alternation bound is actually a family of lower bounds, optimized by the choice of the 
reference tree $\calT$, and we henceforth define $\ALT(\x) \coloneqq \max_\calT \ALT_\calT(\x)$.  The Alternation bound plays a key role in the design and analysis of Tango trees and their variants \cite{DHIP07, Multisplay}. In fact, 
\emph{all} non-trivial $o( \log n)$-competitive BST algorithms~\cite{DHIP07,BDDF10, Multisplay,ChalermsookCS20,CGJPY23} rely on the Alternation bound.  

In the same paper, Wilber proposed another lower bound---the \emph{Funnel Bound} $\Funnel(\x)$--- which is less intuitive and can be defined by the following process: %
Consider the geometric view $\{(X_i,i)\}_{i\in [m]}$ %
of the simple ``move-to-root'' algorithm that simply rotates each searched key $X_i$ to the root by a series of single rotations.
Then $\Funnel(X_i,i)$ is exactly the number of \emph{turns} on the path from the root to $X_i$ right before it is accessed \cite{AM78, IaconoSurvey}. The Funnel bound is then defined as $\Funnel(\x) := \sum_{i=1}^m \Funnel(X_i,i)$.
This view emphasizes the \emph{amortized} nature of the Funnel bound: at any point, there could be 
linearly many keys in the tree that are only \emph{one} turn away from the root, so one can only 
hope to achieve this bound in some amortized fashion.%

The Funnel bound has been repeatedly conjectured to tightly characterize the cost of an offline optimal algorithm~\cite{Wil89, kozma2016binary,ChalermsookCS20,lecomte2019settling}. Recently, Lecomte and Weinstein~\cite{lecomte2019settling} proved that the funnel bound is \textit{rotation-invariant}, meaning that the bound is preserved when the geometric representation of the input sequence is rotated by 90 degrees. This property also holds for an optimal algorithm~\cite{DHIKP09}, giving another evidence that the Funnel bound might give a tight characterization of the cost of an offline optimal algorithm.
While the Funnel bound does not have an algorithmic interpretation like $\Alt(X)$, Levy and Tarjan \cite{LT19} recently observed interesting  similarities between Splay trees and  the Funnel bound. The core difficulty in converting $\Funnel(X)$ into a BST algorithm  is its highly  \emph{amortized} nature (also a feature of Splay trees), compared to the Alternation bound which gives a point-wise lower bound on the retrieval time of $X_i\in \x$. As such, understanding the mathematical properties of the Funnel bound is important in its own right. %

\textbf{Access Sequence Composition and Known Direct-Sum Results.}    Informally, Direct Sum theorems assert a lower bound on the complexity measure of solving $R$ copies of a problem in a given computational model, in terms of the cost of solving a single copy, ideally $C(f^{\circ R}) \gtrsim \Omega(R)\cdot C(f)$, where $f^{\circ R}$ denotes certain $R$-copy \emph{composed} problem. Indeed, the precise notion of composition we use here (a-la  \cite{chalermsook2016landscape}) is crucial, as direct-sum theorems are subtle and often turn out to be \emph{false}~\cite{Raz11}.

A natural definition of sequence-composition in the BST model was introduced by~Chalermsook et al.~\cite{chalermsook2016landscape}. Let $X^{(1)}, \ldots, X^{(\ell)}$ be a sequence of $\ell$ access sequences where $X^{(i)} \in [n_i]^{m_i}$. That is, each sequence $X^{(i)}$ has $n_i$ keys and $m_i$ accesses where $m := \sum_i m_i, n := \sum_i n_i$. We view each sequence $X^{(i)}$ as the $i^{\text{th}}$ queue where we dequeue elements by the order of the sequence $X^{(i)}$ (i.e., in FIFO order). Let $\tld{X}  \in [\ell]^m$ be a sequence with keys in $[\ell]$ such that every $j \in [\ell]$ appears exactly $m_j$ times. We can view $\tld{X}$ as a \textit{template} which defines the ordering of dequeue operations among the $\ell$ queues. We define the \textit{composed sequence} $X :=  \tld{X}(X^{(1)}, \ldots, X^{(\ell)}) \in [n]^m$ as follows. For each $t = 1$ to $ m$, $X_t := q_t + \sum_{i < \tilde X_t}n_i$ where $q_t$ is the  next element dequeued from $X^{(\tilde X_t)}$.  We refer the reader to \Cref{sec:prelims} for the precise definition. 

The direct sum results for the optimal cost are well understood with applications to proving non-trivial bounds of binary search trees. In \cite{chalermsook2016landscape}, they prove (approximate) \textit{subadditivity} of the optimal cost on composed sequences. That is, 
\[
 \OPT(X) \leq   3\cdot \OPT\P{\tld{X}} + \sum_j \OPT\P{X^{(j)}}.
\]
The subadditivity of optimal cost finds application in proving the linear optimal bounds for \quotes{grid} sequences, and a strong separation in the hierarchy of \textit{lazy finger bounds}~\cite{chalermsook2016landscape}.  On the other hand, \cite{ChalermsookCS20} recently proved \textit{superadditivity} of the optimal cost on composed sequences. That is, 
\begin{equation} \label{eq:superadditivity of opt}
 \OPT(X) \geq    \OPT\P{\tld{X}} + \sum_j \OPT\P{X^{(j)}}.
\end{equation}
The superadditivity of optimal cost finds an application in designing a new $O(\log \log n)$-competitive online BST algorithm based on purely geometric formulation~\cite{ChalermsookCS20}. 

However, the direct sum results for Wilber's bounds are poorly understood despite their importance to the pursuit of dynamic optimality.  The only published work that we are aware of is the approximate subadditivity of the Alternation bound~\cite{ChalermsookCS20}. That is,  they proved that 
\begin{equation} \label{eq:approx subadditivity}
    \Alt(X)  \leq 4\cdot \Alt\P{\tld{X}} + 8\cdot \sum_j \Alt\P{X^{(j)}} + O(|X|).
\end{equation}
Their proof is quite involved, and it is based on geometric arguments and the probabilistic method. This finds applications in proving the separation between the Alternation and Funnel bounds. In \cite{ChalermsookCS20}, they used approximate subadditivity of the Alternation bound (\Cref{eq:approx subadditivity}) to prove a near-optimal separation between the Alternation and Funnel bounds. That is, they constructed a sequence $Y$ such that the gap between  $\Alt(Y)$ and $\Funnel(Y)$ is as large as $\Omega(\frac{\log\log n}{\log\log\log n})$. This gap is nearly optimal because the upper bound of Tango tree~\cite{DHIP07} implies that the gap must be $O(\log \log n)$. This gap has been closed by an independent work by Lecomte and Weinstein~\cite{lecomte2019settling}, proving the optimal separation between the Alternation and Funnel bounds. That is, they constructed an instance $Y$ such that the gap between  $\Alt(Y)$ and $\Funnel(Y)$ is as large as $\Omega(\log\log n)$. 

Furthermore, \cite{ChalermsookCS20} also used the approximate subadditivity of the Alternation bound (\Cref{eq:approx subadditivity}) to prove the $\Omega(\frac{\log\log n}{\log\log\log n})$ gap between $\Alt(Y)$ and $\cGB(Y)$ where $\cGB(Y)$ denotes the \textit{Consistent Guillotine Bound} (cGB), a lower bound measure that is an extension of $\Alt(Y)$.

\subsection{Our Results}

We prove that the Alternation bound is subadditive whereas the Funnel bound is superadditive for composed sequences. More precisely, we prove the following theorem. 
\begin{theorem} [Direct-Sum Theorem for Wilber's Bounds] \label{thm:direct sums}
     Let $X \ce \tld{X}(X^{(1)}, \ldots, X^{(l)})$ be a composed sequence. Then
\begin{itemize}  
    \item      $\Alt(X)  \leq \Alt\P{\tld{X}} + \sum_j \Alt\P{X^{(j)}} +  O(|X|)$, and
    \item     $\Funnel(X) \geq \Funnel\P{\tld{X}} + \sum_j \Funnel\P{X^{(j)}} - O(|X|).$ 
\end{itemize}
\end{theorem}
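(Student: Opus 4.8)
The plan is to treat the two bounds separately, since they have different flavors: the Alternation bound is defined via a static reference tree and pointwise alternation counts, while the Funnel bound is an amortized quantity defined via the move-to-root process (equivalently, turns on root-to-key paths). For both parts the core idea is the same: the composed sequence $X$ is a ``blow-up'' of the template $\tilde X$ in which each symbol $j\in[\ell]$ is replaced by a copy of $X^{(j)}$ living on a disjoint block of $n_j$ keys, so a reference tree for $X$ should be built by \emph{substituting} an optimal reference tree $\calT^{(j)}$ for $X^{(j)}$ into the $j\th$ leaf of an optimal reference tree $\tilde\calT$ for $\tilde X$. The $O(|X|)$ slack in each direction should account for the ``interface'' cost at the boundaries between the template level and the block level — essentially one unit of alternation/turn per access when control passes between the two scales.

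\textbf{Subadditivity of $\Alt$.} First I would fix an optimal reference tree $\tilde\calT$ for $\tilde X$ and optimal reference trees $\calT^{(j)}$ for each $X^{(j)}$, and form the composed reference tree $\calT$ by grafting $\calT^{(j)}$ in place of leaf $j$ of $\tilde\calT$. Then I would decompose $\Alt_\calT(X) = \sum_{v\in\calT}\switches_v(X)$ over nodes $v$, splitting into (i) nodes $v$ that came from $\tilde\calT$ and (ii) nodes $v$ that lie inside some grafted $\calT^{(j)}$. For a node $v$ inside $\calT^{(j)}$, accesses to keys outside block $j$ all fall on the same side of $v$, so the alternation sequence at $v$ induced by $X$ is a subsequence of the alternation sequence at $v$ induced by $X^{(j)}$, \emph{padded} with runs of same-side elements; hence $\switches_v(X)\le \switches_v(X^{(j)})$, and summing gives $\le\sum_j\Alt(X^{(j)})$. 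For a node $v$ from $\tilde\calT$, the left/right decision at $v$ for access $X_t$ depends only on which block $\tilde X_t$ lies in (relative to the split at $v$), so the alternation sequence at $v$ under $X$ is exactly the alternation sequence at $v$ under $\tilde X$ — giving $\sum_v\switches_v = \Alt_{\tilde\calT}(\tilde X) = \Alt(\tilde X)$, with no error term at all. So for the Alternation bound I actually expect to get $\Alt(X)\le\Alt(\tilde X)+\sum_j\Alt(X^{(j)})$ exactly, and the stated $O(|X|)$ is slack. (One subtlety: the definition of $\switches_v$ may only count alternations among accesses to the subtree \emph{below} $v$, or may ignore the very first access in each block; either convention changes things by at most $O(1)$ per block per access, absorbed into $O(|X|)$ if needed.)

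\textbf{Superadditivity of $\Funnel$.} Here I would work with the turn-based definition: $\Funnel(X)=\sum_t(\text{number of turns on the root-to-}X_t\text{ path in the move-to-root BST just before access }t)$. The key structural claim is that in the move-to-root tree maintained on $X$, the keys of block $j$ always induce a connected contiguous ``sub-BST'', and the shape of this sub-BST, \emph{restricted to the relative order within block $j$}, evolves exactly as the move-to-root tree on $X^{(j)}$ does, driven by the subsequence of accesses into block $j$; meanwhile, at the ``macro'' level, contracting each block to a single node gives precisely the move-to-root tree on $\tilde X$ driven by $\tilde X$. Therefore the root-to-$X_t$ path in the $X$-tree decomposes into a macro segment (from the global root down to the block containing $X_t$) plus a micro segment (inside that block down to $X_t$), and the number of turns is at least (turns on the macro segment) $+$ (turns on the micro segment) $-\,O(1)$, the $-O(1)$ covering the single junction where the two segments meet and possibly a turn lost there. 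Summing over $t$, the micro contributions sum to $\sum_j\Funnel(X^{(j)})$ and the macro contributions sum to $\Funnel(\tilde X)$, with total loss $O(m)=O(|X|)$. I expect the main obstacle to be establishing this ``block-locality'' invariant for move-to-root rigorously: one must check that rotating a key of block $j$ to the root never permanently interleaves keys of different blocks — intuitively true because all of block $j$'s keys are an interval in key-space and the move-to-root rotations preserve in-order traversal, but making the inductive statement precise (what exactly is the macro tree, and why a rotation at the micro level looks like a no-op or a rotation at the macro level) is where the real work lies. An alternative, possibly cleaner, route is to instead use the rotation-invariant ``funnel'' formulation of Lecomte--Weinstein~\cite{lecomte2019settling} and argue about funnels of points directly in the geometric picture, where a funnel for a point in block $j$ splits cleanly into a within-block funnel and a between-block funnel; but either way the heart of the argument is the same locality decomposition, and the $O(|X|)$ error is the price of the block boundaries.
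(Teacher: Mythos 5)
Your argument for the subadditivity of $\Alt$ has a directional error that breaks the proof. Since $\Alt(X)=\max_{\calT}\Alt_\calT(X)$, an upper bound on $\Alt(X)$ must bound $\Alt_\calT(X)$ for an \emph{arbitrary} reference tree $\calT$ over the full key set; the maximizing tree need not be of the grafted form you construct. What your grafted-tree computation actually shows is that \emph{some} tree $\calT$ satisfies $\Alt_\calT(X)\approx\Alt(\tld{X})+\sum_j\Alt(X^{(j)})$, which yields the superadditive inequality $\Alt(X)\ge\Alt(\tld{X})+\sum_j\Alt(X^{(j)})$ --- the opposite of what is claimed. The paper goes the other way: it starts from an arbitrary $\calT$, extracts induced trees $\tld\calT$ over $[l]$ and $\calT_j$ over each $S_j$, and classifies every preferred-child alternation of $\calT$ into four types, two of which are ``boundary'' events that do not map to alternations in any component tree and must be charged separately --- one in a per-access manner and one in a per-block manner (the latter using the equal-length assumption $|X^{(1)}|=\cdots=|X^{(l)}|$). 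This is also why your expectation that the $O(|X|)$ term is pure slack is unfounded: it is exactly the cost of those boundary types for a general $\calT$, and the classification argument is the real content of the proof.

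For the Funnel bound, your ``alternative, possibly cleaner, route'' via the geometric funnel is in fact the paper's proof: the funnel of a point $p=(x,t)$ in $G_X$ contains an internal funnel (coming from $X^{(j_x)}$) and an external funnel (coming from $\tld{X}$). The one step you leave implicit is the key one: every external funnel point occurs \emph{later in time} than every internal funnel point, which is what guarantees the two side-switch counts add up with no interference (indeed, with no $O(m)$ loss at all under the paper's definition). Your primary move-to-root route with the block-locality invariant is plausible but, as you note yourself, leaves the hard invariant unproved; the geometric argument avoids it entirely.
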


 Our proof of the subadditivity of the Alternation bound is simpler and yields stronger bounds than the proof in~\cite{ChalermsookCS20} (Theorem 3.6 in their arXiv version).  Direct sum theorems are a natural black-box tool for hardness amplification, as they effectively reduce complex lower bounds to a simpler ``one-dimensional'' problem.  Indeed, as a showcase application, we use the base-case separation proved in \cite{lecomte2019settling} along with \Cref{thm:direct sums} to amplify both Wilber's bounds. %
 Let $\ol \ALT(X) := \ALT(X)/|X|,$ and $\ol \Funnel(X) := \Funnel(X)/|X|$. They proved that there is a sequence $Y$ such that 
 \[ 
\al{
\ol\Alt\P{Y} &\le O(1), \mbox{but }\\
\ol\Funnel\P{Y} &\ge \Omega\P{\log\log n }.} \] 
Note that the sequence $Y$ is \textit{easy} w.r.t. the Alternation bound since $\ol\Alt\P{Y} \le O(1)$. We use the sequence $Y$ as a base-case and apply \Cref{thm:direct sums} to construct \textit{hard} sequence w.r.t. the funnel bound while maintaining the separation as in the following theorem \footnote{This can be viewed as hardness amplification because the new sequence becomes harder from the optimum's point of view without losing the gap too much.}.

\begin{theorem}[Hardness Amplification]\label{thm:main_boosting}
There is a constant $K>0$ such that for any $n$ of the form $2^{2^r}$ and any power-of-two $R \le \f{\log n}{K}$, there is a sequence $Y_n^{\circ R} \in [n]^{m'}$ with $m' \le \poly(n)$ such that
\[ 
\al{
\ol\Alt\P{Y_n^{\circ R}} &\le O(R)\\
\ol\Funnel\P{Y_n^{\circ R}} &\ge \Omega\P{R\log \P{\f{\log n}{R}}}.} \] 
\end{theorem}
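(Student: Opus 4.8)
The plan is to obtain $Y_n^{\circ R}$ by composing the base-case sequence of \cite{lecomte2019settling} with itself $R$ times in a nested fashion, using a suitably repeated copy of that same sequence as the template at every level. Set $\mu \ce n^{1/R}$. Since $n = 2^{2^r}$ and $R = 2^s$ is a power of two with $R \le \log n / K$, we have $\mu = 2^{2^{r-s}}$ and $r - s \ge \log_2 K$, so $\mu$ is a legal universe size for the base construction and $\log\log\mu = r-s = \log(\log n/R)$ exceeds any prescribed absolute constant once $K$ is large enough. Let $Y \ce Y_\mu$ be the sequence on $[\mu]$ with $\ol\Alt(Y)\le O(1)$ and $\ol\Funnel(Y)\ge\Omega(\log\log\mu)$; after a routine padding we may assume $Y$ has all of its key-frequencies equal. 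Define $W_1 \ce Y$ and, for $i\ge 1$, $W_{i+1} \ce \tld{W}^{(i+1)}(W_i,\dots,W_i)$ — the composition of $\mu$ copies of $W_i$ under a template $\tld W^{(i+1)}$ which is $Y$ concatenated with itself $\mu^{\,i}$ times so that each of its $\mu$ keys occurs exactly $|W_i|$ times (a pure divisibility check). Output $Y_n^{\circ R}\ce W_R$. By construction the key universe of $W_{i+1}$ is $\mu$ times that of $W_i$, so $W_R$ lives on $[\mu^R]=[n]$; and $|W_{i+1}|=\mu\,|W_i|$, hence $|W_R|=\mu^{R-1}|Y|\le\poly(n)$.

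Both estimates then follow by telescoping \Cref{thm:direct sums} along the recursion, after dividing through by $|W_{i+1}|=|\tld W^{(i+1)}|$. The Alternation part gives $\ol\Alt(W_{i+1}) \le \ol\Alt(W_i) + \ol\Alt(\tld W^{(i+1)}) + O(1)$; since $\tld W^{(i+1)}$ is a repetition of $Y$, a simple repetition lemma gives $\ol\Alt(\tld W^{(i+1)}) \le \ol\Alt(Y)+O(1)\le O(1)$, and unrolling the recursion yields $\ol\Alt(W_R)\le O(R)$. The Funnel part gives $\ol\Funnel(W_{i+1}) \ge \ol\Funnel(W_i) + \ol\Funnel(\tld W^{(i+1)}) - O(1) \ge \ol\Funnel(W_i) + \Omega(\log\log\mu) - O(1)$; because $\log\log\mu = \log(\log n/R)$ is, by the choice of $K$, larger than the constant hidden in the $-O(1)$, each step contributes $\Omega(\log\log\mu)$ net, and unrolling gives $\ol\Funnel(W_R)\ge\Omega(R\log\log\mu) = \Omega\!\P{R\log(\tfrac{\log n}{R})}$.

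Two ingredients still need to be nailed down, both routine. First, the repetition lemma — concatenating a sequence $Z$ with itself $k$ times perturbs the normalized bounds by only $O(1)$, i.e.\ $\ol\Funnel(Z^k)\ge\ol\Funnel(Z)-O(1)$ and $\ol\Alt(Z^k)\le\ol\Alt(Z)+O(1)$ — which is essentially \Cref{thm:direct sums} applied with a ``stacked'' template, or can be checked directly from the definitions of the two bounds. Second, the claim that $Y_\mu$ may be taken with uniform key-frequencies, needed only so that the templates are exactly balanced; if the construction of \cite{lecomte2019settling} is not already regular, padding costs an extra $O(|Y_\mu|)$ in each bound, absorbed by the $O(|X|)$ slack. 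The main obstacle I anticipate is precisely the constant-chasing in the second paragraph: $R$ nested invocations of \Cref{thm:direct sums} cost $O(1)$ per level on each side, so the Alternation bound lands at exactly the claimed $\Theta(R)$ and the Funnel bound loses $\Theta(R)$ — harmless only because $\log(\log n/R)=\Omega(1)$ when $R\le\log n/K$, so $K$ must be fixed large enough (relative to the constants of \cite{lecomte2019settling} and of \Cref{thm:direct sums}) that the per-level gain $\Omega(\log\log\mu)$ strictly dominates the per-level loss; once that is arranged, the telescoping is immediate.
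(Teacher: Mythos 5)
Your proof is correct and rests on the same two pillars as the paper's: the Lecomte--Weinstein base separation and telescoping \Cref{thm:direct sums}. The recursion is structured differently, however. The paper doubles $R$ at each step, defining $Y_n^{\circ 2R}$ as a composition of $\sqrt{n}$ copies of $Y_{\sqrt n}^{\circ R}$ whose template is $Y_{\sqrt n}^{\circ R}$ itself repeated $\sqrt n$ times --- a balanced, self-similar recursion of depth $\log R$ in which the template is the already-boosted object and the net Funnel gain per doubling is extracted from the identity $\log\f{\log\sqrt n}{R}=\log\f{\log n}{2R}$. You instead run a linear recursion of depth $R$ with fixed block size $\mu=n^{1/R}$, always using the repeated \emph{base} sequence $Y_\mu$ as the template, so each level contributes its $\Omega(\log\log\mu)$ directly; this makes the per-level gain/loss accounting more transparent at the price of an unbalanced composition, and both bookkeepings close. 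One correction: your ``repetition lemma'' is \emph{not} an instance of \Cref{thm:direct sums} with a stacked template, since \Cref{def:composition} requires the inner sequences to occupy disjoint key ranges while time-concatenation of $Z$ with itself reuses the same keys; you must take your fallback and verify it from the definitions. That check is easy --- $\Funnel(Z^{k})\ge k\,\Funnel(Z)$ because each copy's funnel points survive intact and $\mixValue$ is monotone, and $\Alt_\cT(Z^{k})\le k\,\Alt_\cT(Z)+(k-1)\cdot|\cT|$ because concatenating mixing strings adds at most one switch per junction per node --- and note the paper needs and silently uses the same fact, since its template is likewise a $\sqrt n$-fold repetition.
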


\textbf{Remark 1.} We emphasize that the approximate subadditivity of the Alternation bound (\Cref{eq:approx subadditivity}) is not sufficient for such hardness amplification. On the other hand, one could also use the superadditivity of the optimal cost (\Cref{eq:superadditivity of opt}) instead of the Funnel bound to prove hardness amplification.

\textbf{Tightness of the Separation.} As a corollary of \Cref{thm:main_boosting}, we can derive the following trade-offs between the multiplicative and additive factors for the Alternation bound.

\begin{theorem}\label{thm:lower bound alt charging}

    Let $\alpha,\beta: \N \to \R_{\ge 1}$ be any functions such that some BST algorithm achieves an amortized cost of $\alpha(n)\ol\Alt(X) + \beta(n)$ for all access sequences $X$ over $n$ keys. Then $\alpha(n) \ge \Omega\P{\log \P{\f{\log n}{\beta(n)}}}$.

\end{theorem}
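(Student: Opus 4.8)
The plan is to turn \Cref{thm:main_boosting} into the claimed trade-off via a short argument: an algorithm whose cost is charged to the Alternation bound cannot beat the (much larger) Funnel bound on the amplified instance. Suppose a BST algorithm $\calT$ has amortized cost $\alpha(n)\ol\Alt(X)+\beta(n)$ on every access sequence $X$ over $n$ keys, i.e.\ its total cost satisfies $\calT(X)\le \alpha(n)\,\Alt(X)+\beta(n)\,\abs{X}$. Since the Funnel bound is a valid lower bound on $\OPT$ and $\OPT(X)\le\calT(X)$, dividing through by $\abs{X}$ gives one clean inequality, valid for all $X$:
\[
  \ol\Funnel(X)\;\le\;\alpha(n)\,\ol\Alt(X)+\beta(n).
\]
The rest of the proof is to plug the amplified sequence $X=Y_n^{\circ R}$ into this inequality and optimize over $R$.

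First I would dispense with the shape restriction on $n$: for arbitrary $n$, replace it by the largest $n'=2^{2^r}\le n$, which satisfies $\log n'=\Theta(\log n)$, and pad the key universe back up to $n$ with keys that are never accessed. Adding never-accessed keys changes $\Alt$, $\Funnel$ and $\OPT$ of a sequence by at most constant factors, so this is harmless; hence we may assume $n=2^{2^r}$.

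Now the case analysis, with $K$ the constant from \Cref{thm:main_boosting}. If $\beta(n)\ge \log n/(2K)$, then $\log(\log n/\beta(n))=O(1)$ and the bound $\alpha(n)\ge\Omega(\log(\log n/\beta(n)))$ is immediate from $\alpha(n)\ge 1$. Otherwise $\beta(n)<\log n/(2K)$; let $R$ be the smallest power of two with $R\ge\beta(n)$, so $\beta(n)\le R<2\beta(n)\le \log n/K$ and $R\ge 1$ — a legal choice in \Cref{thm:main_boosting}. Substituting $X=Y_n^{\circ R}$ into the displayed inequality and invoking $\ol\Alt(Y_n^{\circ R})\le O(R)$ and $\ol\Funnel(Y_n^{\circ R})\ge\Omega(R\log(\log n/R))$ gives
\[
  \Omega\P{R\log\P{\f{\log n}{R}}}\;\le\;O(R)\,\alpha(n)+\beta(n).
\]
Since $\beta(n)\le R$, the additive term is dominated (up to a constant) by the left side once $\log(\log n/R)$ exceeds a fixed constant; solving for $\alpha(n)$ and using $\log(\log n/R)\ge\log(\log n/\beta(n))-1$ yields $\alpha(n)\ge\Omega(\log(\log n/\beta(n)))-O(1)$. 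When $\log(\log n/\beta(n))$ is above a large enough absolute constant, the $-O(1)$ is absorbed and we get $\alpha(n)\ge\Omega(\log(\log n/\beta(n)))$; when it is below, the statement is again trivial from $\alpha(n)\ge 1$. Combining the finitely many regimes with a single small enough constant in the $\Omega$ finishes the proof.

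The one step that needs care is the choice of $R$: it must be large enough — of order $\beta(n)$ — that the additive $\beta(n)$ term does not cancel the Funnel lower bound, yet small enough both to keep $\log(\log n/R)$ within an additive constant of $\log(\log n/\beta(n))$ and to satisfy the hypothesis $R\le\log n/K$ of \Cref{thm:main_boosting}. Taking $R=\Theta(\beta(n))$ threads this needle, with the residual $O(1)$ slack handled by the trivial regime where $\alpha(n)\ge 1$ already suffices; everything else is routine bookkeeping together with the standard padding observation.
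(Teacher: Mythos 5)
Your proposal is correct and follows essentially the same route as the paper's proof: chain $\Funnel \le \OPT \le$ algorithm's cost, plug in the amplified sequence $Y_n^{\circ R}$ with $R = \Theta(\beta(n))$, and solve for $\alpha(n)$ (the paper takes $R$ the largest power of two $\le \beta(n)$ rather than the smallest one $\ge \beta(n)$, an immaterial difference). Your extra padding step to handle arbitrary $n$ is a minor refinement the paper sidesteps by proving the bound only for infinitely many $n$.
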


As we discuss below, the trade-offs are tight with the matching upper bounds from Tango Trees (which can be derived directly from Tango trees). For convenience, we present self-contained BST algorithms with the matching upper bounds.%

\begin{lemma} \label{lem:bootstrapping tango} There is a BST algorithm that takes an integer $k \leq \log n$ as a parameter and serve the sequence $X = (X_1, \cdots, X_m)$  with the total access time of
\[
O\P{ \P{\ALT(X) + m \cdot \frac{\log n}{k}} \cdot (\log k + 1)}.
\]
\end{lemma}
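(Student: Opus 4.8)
The plan is to serve $X$ with a parameterized version of Tango trees~\cite{DHIP07}. I would fix once and for all the reference tree $P$ to be the perfectly balanced BST on the keys $[n]$, which has depth at most $\log n$, and recall the Tango-tree machinery: every internal node of $P$ carries a \emph{preferred child} pointer, the preferred-child edges decompose $P$ into vertex-disjoint \emph{preferred paths}, each preferred path is stored as a balanced auxiliary BST, and the auxiliary trees are glued together in key order (through ``boundary'' nodes) into one composite BST on $[n]$. The only modification is to cap preferred-path length at $k$: group the depth-levels of $P$ into $\ceil{\log n / k}$ consecutive blocks of $k$ levels, and forbid a preferred path from crossing a block boundary (so a run of preferred-child edges is cut whenever it would leave a block). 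Consequently every auxiliary tree has at most $k+1$ nodes and hence depth $O(\log k+1)$. When $k=\log n$ this is exactly the usual Tango tree.

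Next I would analyze a single access. To serve $X_t$ the algorithm walks from the composite root down to the leaf $X_t$; this traversal follows the root-to-$X_t$ path $\pi$ in $P$, visiting precisely the auxiliary trees whose preferred path meets $\pi$. Write $I_t$ for the number of \emph{interleaves} on $\pi$ — the edges of $\pi$ that lead to a \emph{non}-preferred child of $P$ — which is exactly the contribution of access $t$ to the interleave bound $\mathrm{IB}_P(X)$ with respect to $P$. The path $\pi$ is broken into preferred-path segments only by its $I_t$ interleave edges and by the at most $\ceil{\log n / k}$ block boundaries it crosses, so $\pi$ meets at most $1+I_t+\ceil{\log n / k}$ auxiliary trees, each of which is searched in $O(\log k+1)$ time. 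Afterwards, exactly as for Tango trees, the preferred-child pointers are updated so that $\pi$ becomes preferred as far down as possible; this costs $O(1)$ split/merge operations on auxiliary trees per interleave edge of $\pi$ (block-boundary cuts only change \emph{which} segment is split or merged, never the number of operations), and each such operation touches $O(\log k+1)$ nodes because auxiliary trees have size $O(k)$. Hence access $t$ costs $O\bigl((1+I_t+(\log n)/k)(\log k+1)\bigr)$, and the composite structure remains a valid BST on $[n]$.

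Summing over $t=1,\dots,m$ and using $\sum_t I_t=\mathrm{IB}_P(X)$ gives a total cost of $O\bigl((m+\mathrm{IB}_P(X)+m(\log n)/k)(\log k+1)\bigr)$. To finish I would invoke the standard relation $\mathrm{IB}_P(X)=O(\ALT_P(X))+O(m)\le O(\ALT(X))+O(m)$ — the first inequality is the usual constant-factor comparison between the interleave bound and Wilber's first bound for a fixed reference tree, and the second holds because $\ALT(X)=\max_{\calT}\ALT_{\calT}(X)\ge\ALT_P(X)$ — together with $m\le m(\log n)/k$ (since $k\le\log n$), so that both extra terms are absorbed and the bound becomes $O\bigl((\ALT(X)+m(\log n)/k)(\log k+1)\bigr)$, as claimed.

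I expect the main obstacle to be the Tango bookkeeping under the block-boundary cuts rather than anything conceptual: one must verify that restricting preferred paths to blocks does not inflate the per-access work beyond $O(1+I_t+(\log n)/k)$ auxiliary-tree operations — which it does not, because a block cut never flips a preferred-child pointer and therefore does not change $I_t$ or the interleave bound — and that the glued auxiliary trees still form a valid BST on $[n]$ on which split and merge run in $O(\log k+1)$ time. The other routine-but-necessary check is pinning down the constant factor (and the $O(m)$ slack for first accesses) in the comparison between $\mathrm{IB}_P$ and $\ALT_P$.
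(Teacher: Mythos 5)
Your proposal is correct and follows essentially the same route as the paper's proof: a Tango tree over the perfect reference tree whose depth levels are grouped into blocks of height $k$, so that each auxiliary tree has size $O(k)$ and each access costs $O(\log k+1)$ per preferred-child change plus $O(\log k + 1)$ per block crossed, summing to the claimed bound after observing $\ALT_{\cT}(X)\le\ALT(X)$. The only cosmetic difference is that you count interleave edges on the access path while the paper counts preferred-child changes per access, which coincide up to the $O(m)$ slack you already account for.
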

We prove \Cref{lem:bootstrapping tango} in \Cref{sec:appendix:boosting tango}. 

In this algorithm, the additive cost is $\Theta\P{\f{\log n \log k}{k}}$ and multiplicative cost $\Theta(\log k)$. By \Cref{thm:lower bound alt charging}, if $\beta(n) = \Theta\P{\f{\log n \log k}{k}}$, then  we have
\[
\alpha(n)  \ge \Omega\P{\log \P{\f{\log n}{\beta(n)}}} = \Omega\P{\log \P{\f{k}{\log k}}} = \Omega(\log k),
\]
so our trade-off is optimal for any sufficiently large $k \le \log n$.

\textbf{Optimality of Tango Trees.}  As another corollary of \Cref{thm:main_boosting}, we prove the optimality of Tango Trees among any algorithm charging its cost to Wilber's Alternation bound for all values of the $\ol \ALT(X)$. Note that $\ol \ALT(X) \leq O(\log n)$. Previously, the optimality of Tango trees is known only when $\ol \ALT(X) = O(1)$~\cite{lecomte2019settling}. 

The basic idea of Tango Trees is to ``mimic'' Wilber's alternation bound via a BST, by dynamically maintaining a partition  
of the reference tree $\calT$ into \emph{disjoint paths}, formed by designating, for each node $x \in \calT$, the unique ``preferred'' descendant in $\calT$ (left or right) which was \emph{acessed most recently}. Since each ``preferred path'' has length $|p|\leq \text{depth}(\calT)\leq \log n$, every \emph{path} can itself be stored in a BST, so assuming these paths can be dynamically maintained (under split and joins), searching for the predecessor of a key $x_i$ \emph{inside} each path only takes $O(\log \log n)$ time, until the search ``falls-off'' the current preferred path and switches to a different one. The key observation is that this ``switch'' can be charged to $\ALT_\calT(\x)$, as it certifies a new alternation in Wilber's lower bound, hence OPT must pay for this move as well. This elegant argument directly leads to the aforementioned $O(\log\log n)\cdot \OPT(\x)$ search time.  

The analysis of Tango trees relies on charging the algorithm's cost to the Alternation bound. One may ask if the bound can be improved using a clever algorithm (not necessarily Tango trees) so that we can charge $o(\log \log n)$ factor to the Alternation bound. 
     Unfortunately, the answer is no as there are known examples of access sequences $\tilde{\x}$ with $\ALT(\tilde{\x}) = O(m)$ but $\OPT(\tilde \x) = \Theta(m\log \log n)$ \cite{lecomte2019settling,ChalermsookCS20}. In light of this,  Tango trees are indeed off by a factor $\Theta(\log\log n)$ from $\ALT(\tilde{\x})$.
%
Interestingly, when $\OPT(\x) \gtrsim m \frac{\log n}{2^{o(\log \log n)}}$, one can do somewhat better. Let $\ol \OPT(X) := \OPT(X)/|X|$. The Tango tree, presented by \cite{DHIP07} (see the discussion in their Section 1.5), has a competitive ratio of


\begin{equation} \label{eq:fine-grained tango}
     O\P{1+ \log \frac{\log n}{\ol \OPT(\x)}} = o(\log \log n).    
\end{equation}

 
The condition that allows $o(\log\log n)$-competitiveness is rather narrow: the amortized optimal cost $\ol \OPT(X)$ must be very close to $\log n$ to achieve $o(\log \log n)$-competitiveness. Can we achieve $o(\log\log n)$-competitiveness with a wider range of $\ol\OPT(X)$ using $\ALT(X)$?  %

Unfortunately, the answer is no. More generally, we prove a matching lower bound: the competitive ratio of any BST algorithm based on the Alternation bound must be at least $\Omega(\log \frac{\log n}{\ol \OPT(X)})$, matching to the upper bound of the Tango tree by~\Cref{eq:fine-grained tango}. More precisely, we prove the following theorem whose proof is a small modification of the proof in \Cref{thm:lower bound alt charging}. %

\begin{theorem} \label{thm:lower bound alt as function of opt}
    Let $\alpha: \N\times\R_{\ge 1} \to \R_{\ge 1}$ be any function such that some BST algorithm achieves an amortized cost of $\alpha\P{n,\ol\OPT(X)} \cdot \P{\ol\Alt(X)+1}$ for all access sequences $X$ over $n$ keys. Then $\alpha(n,s) \ge \Omega\P{\log \P{\f{\log n}{s}}}$.
\end{theorem}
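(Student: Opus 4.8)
The plan is to run the argument of \Cref{thm:lower bound alt charging} on the amplified instances of \Cref{thm:main_boosting}, with two small changes: the additive slack $\beta(n)$ is replaced by the $\ol\OPT$-dependent multiplicative factor $\alpha(n,\ol\OPT)$, and in the final step we compare the algorithm directly against $\OPT$ rather than routing through $\ol\Funnel$. Let $K$ be the constant of \Cref{thm:main_boosting}, fix $n=2^{2^r}$, and let $R$ range over powers of two with $R\le\log n/K$. Fixing such an $R$ and setting $X\ce Y_n^{\circ R}$, \Cref{thm:main_boosting} gives $\ol\Alt(X)\le O(R)$ and $\ol\Funnel(X)\ge\Omega\P{R\log(\log n/R)}$; since Wilber's Funnel bound lower-bounds $\OPT$ up to a constant, $\ol\OPT(X)\ge\Omega\P{\ol\Funnel(X)}\ge\Omega\P{R\log(\log n/R)}$, so in particular $\ol\OPT(X)\ge\Omega(R)$.

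Now compare the assumed algorithm, which serves $X$ in total cost at most $|X|\cdot\alpha\P{n,\ol\OPT(X)}\cdot(\ol\Alt(X)+1)$, with the trivial lower bound $\OPT(X)=|X|\cdot\ol\OPT(X)$. Dividing by $|X|$ and using $\ol\Alt(X)+1\le O(R)$ (valid since $R\ge1$),
\[
\alpha\P{n,\ol\OPT(X)}\;\ge\;\frac{\ol\OPT(X)}{O(R)}\;\ge\;\Omega\P{\log(\log n/R)}.
\]
Writing $s\ce\ol\OPT(X)$ and using $s\ge\Omega(R)$, so that $R\le O(s)$ and hence $\log(\log n/R)\ge\log(\log n/s)-O(1)$, we conclude $\alpha(n,s)\ge\Omega\P{\log(\log n/s)}$ — the $-O(1)$ being absorbed once $\log(\log n/s)$ exceeds an absolute constant, and the bound being trivial ($\alpha\ge1$) otherwise.

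This establishes the bound at every value $s=\ol\OPT(Y_n^{\circ R})$, and it remains to see that these cover the meaningful range. Feeding $X=Y_n^{\circ R}$ (for which $\ALT(X)\le O(R)\,|X|$) to the Tango-type algorithm of \Cref{lem:bootstrapping tango} with parameter $k\approx\log n/R$ gives $\OPT(X)\le O\P{R\log(\log n/R)}\cdot|X|$, so $\ol\OPT(Y_n^{\circ R})=\Theta\P{R\log(\log n/R)}$; hence $s$ grows by at most a constant factor between consecutive powers of two $R$ and sweeps out $[\Theta(\log\log n),\Theta(\log n)]$ up to constants, while for $s>O(\log n)$ the statement is vacuous (no access sequence over $n$ keys has $\ol\OPT>O(\log n)$). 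I expect this last step to be the main obstacle: unlike the additive-slack version of \Cref{thm:lower bound alt charging}, here $\alpha$ multiplies $\ol\Alt(X)+1$, so the lower bound on $\alpha$ is produced only at the $\ol\OPT$-values of the amplified instances, and one needs \emph{both} directions of $\ol\OPT(Y_n^{\circ R})=\Theta(R\log(\log n/R))$ — the lower bound via the Funnel bound and \Cref{thm:main_boosting}, and the matching upper bound via \Cref{lem:bootstrapping tango} — to turn the per-instance inequalities into the stated bound across the relevant range of $s$.
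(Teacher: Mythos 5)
Your proposal is correct and follows essentially the same route as the paper: apply the assumed guarantee to the amplified sequences $Y_n^{\circ R}$, combine $\ol\Alt(Y_n^{\circ R}) \le O(R)$ with $\ol\OPT(Y_n^{\circ R}) \ge \Omega(\ol\Funnel(Y_n^{\circ R})) \ge \Omega(R\log(\log n/R))$, and convert the resulting bound in $R$ into one in $s = \ol\OPT(Y_n^{\circ R})$ via $R \le O(s)$. The only difference is that you spell out the coverage of the achievable values of $s$ (via the matching upper bound from \Cref{lem:bootstrapping tango}), which the paper's full version states but leaves as an easy verification.
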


 As a corollary of \Cref{thm:lower bound alt as function of opt},\footnote{We remark that we cannot set $\beta(n) = \ol \OPT(X)$ in \Cref{thm:lower bound alt charging} because  $\beta(n)$ does not depend on $X$ and, even if $\ol \OPT(X)$ is a function of $n$, the construction of the sequence $X$ depends on $\beta(n)$. We need the lower bound of \Cref{thm:lower bound alt as function of opt}.} the lower bound of $\Omega(\log \frac{\log n}{\ol \OPT(X)})$ follows by setting $s$ to be within constant factor of $\ol \OPT(X)$.  This holds for every BST algorithm based on the Alternation bound.  With the matching upper bound by~\Cref{eq:fine-grained tango},  Tango tree optimally utilizes the Alternation bound.

\subsection{Further Related Work}


  Splay trees and GreedyFuture are prime candidates for the dynamic optimality conjecture since they both satisfy many important properties of dynamic trees including static optimality~\cite{ST85}, working-set property~\cite{ST85,Fox11},  dynamic finger property~\cite{Cole00,ChalermsookJ20}, and more~(see the surveys~\cite{IaconoSurvey,kozma2016binary} for an overview of the results in the field). Although they are not yet known to have $o(\log n)$-competitiveness, they have substantially better bounds for special cases. For example, both Splay trees and GreedyFuture are dynamically optimal for \textit{sequential access} sequences~\cite{Elmasry04,Fox11}. For   \textit{deque sequences}, Splay trees are $O(\alpha^*(n))$-competitive~\cite{pettie2007splay} whereas GreedyFuture is $O(\alpha(n))$-competitive~\cite{ChalermsookGJAP23}. Here, $\alpha(n)$ denotes the inverse Ackermann function and $\alpha^*(n)$ is the iterated function of $\alpha(n)$. The sequential and deque sequences are special cases of the \textit{pattern-avoiding access} sequences~\cite{CGKMS15}. For any fixed-size pattern, GreedyFuture is  $O(2^{(1+o(1))\alpha(n)})$-competitive for pattern-avoiding access sequences~\cite{CPY24}. It was shown recently that an optimal BST algorithm takes $O(n)$ total cost for any fixed pattern~\cite{OptPattern23} and there is an optimal algorithm for sorting pattern-avoiding sequences~\cite{abs-2409-07868}.  The bounds for specific classes of patterns can be improved if preprocessing is allowed~\cite{CGKMS15,GoyalG19} (i.e., the initial tree can be set before executing the online search sequences).  Recently, a slight modification of GreedyFuture was shown to be $O(\sqrt{\log n})$-competitive~\cite{CGJPY23}.  An important application of GreedyFuture (or any competitive online BSTs) includes adaptive sorting using treesort~\cite{BlellochD23,CPY24} and heapsort~\cite{KS18}. %

 The lower bounds in the literature other than Wilber's bounds include the \textit{maximum independent rectangle}  and \textit{SignedGreedy} bounds~\cite{DHIKP09}, which subsume Wilber's Alternation and Funnel bounds. A similar lower bound was presented by Derryberry et al.~\cite{DerrberrySW05}.    Recently, \textit{Guillotine} Bound~\cite{ChalermsookCS20} was introduced, which is a generalization of Wilber's Alternation bound. Unfortunately, it is unclear how to design an algorithm that utilizes these bounds.  Recently, Sadeh and Kaplan~\cite{SadehK23} proved that the competitive ratio of GreedyFuture cannot be less than 2 for the multiplicative factor, or $o(m\log\log n)$ for the additive factor.

\subsection{Paper Organization}

We first describe terminologies and notations in \Cref{sec:prelims}. We prove the direct-sum results for Wilber's bounds (\Cref{thm:direct sums}) in \Cref{sec:composition}. In \Cref{sec:boosting}, we prove the hardness amplification of Wilber's bounds while maintaining their separation (\Cref{thm:main_boosting}) and we also prove \Cref{thm:lower bound alt charging}. We prove the superadditivity of the Funnel Bound in \Cref{sec:appendix:supperadditivity funnel}.  

\section{Preliminaries}\label{sec:prelims}

We follow notations and terminologies from~\cite{lecomte2019settling}. 

\begin{definition}[access sequence]
    An access sequence is a finite sequence $X = (X_1, \ldots, X_m) \in S^m$ of values from a finite set of keys $S \sse \R$. Usually, we let $S \ce [n]$.
\end{definition}

To make our definitions and proofs easier, we will work directly in the geometric representation of access sequences as (finite) sets of points in the plane $\R^2$.
\begin{definition}[geometric view]
Any access sequence $X=(X_1, \ldots, X_m) \in S^m$ can be represented as the set of points $G_X\ce\{(X_t, t) \mid t \in [m]\}$, where the $x$-axis represents the key and the $y$-axis represents time (see Figure~\ref{fig:geom-view}).
\end{definition}

\begin{figure}[h]
\centering
\newcommand{\axes}[2]{
    \newcommand{\start}{-.5}
    \draw[->] (\start,0) -- (#1,0) node[pos=0.55, below] {keys};
    \draw[->] (0,\start) -- (0,#2) node[pos=0.55, left] {time};
}
\newcommand{\pointAndCoord}[2]{
\node[cross] (p1) at (#1,#2) {};
\node[gray, right=0mm of p1] {$({\color{black}{{#1}}},#2)$};
}
$
X = (4,1,3,5,4,2)
\quad \longrightarrow \quad
\vcenter{\hbox{
\begin{tikzpicture}[scale=0.5]
\axes{6}{7}
\pointAndCoord{4}{1}
\pointAndCoord{1}{2}
\pointAndCoord{3}{3}
\pointAndCoord{5}{4}
\pointAndCoord{4}{5}
\pointAndCoord{2}{6}
\draw [decorate,decoration={brace,amplitude=10pt},xshift=13mm,yshift=0pt]
(6,6.5) -- (6,0.5) node [midway,xshift=8mm] 
{$G_X$};
\end{tikzpicture}
}}
$
\caption{transforming $X$ into its geometric view $G_X$}\label{fig:geom-view}
\end{figure}

By construction, in $G_X$, no two points share the same $y$-coordinate. We will say such a set has ``distinct $y$-coordinates''. In addition, we note that it is fine to restrict our attention to sequences $X$ without repeated values.\footnote{Indeed, Appendix~E in \cite{CGKMS15} gives a simple operation that transforms any sequence $X$ into a sequence $\spl(X)$ without repeats such that $\OPT(\spl(X)) = \Theta(\OPT(X))$. Thus if we found a tight lower bound $L(X)$ for sequences without repeats, a tight lower bound for general $X$ could be obtained as $L(\spl(X))$.}
The geometric view $G_X$ of such sequences also has no two points with the same $x$-coordinate. We will say that such a set has ``distinct $x$- and $y$-coordinates''.


\begin{definition}[$x$- and $y$-coordinates]
For a point $p \in \R^2$, we will denote its $x$- and $y$-coordinates as $p.x$ and $p.y$. Similarly, we define $P.x \ce \{p.x \mid p \in P\}$ and $P.y \ce \{p.y \mid p \in P\}$.
\end{definition}

We start by defining the \emph{mixing value} of two sets: a notion of how many two sets of numbers are interleaved. It will be useful in defining both the \IB{} and the \FB{}. We define it in a few steps.
\begin{definition}[mixing string]
Given two disjoint finite sets of real numbers $L,R$, let $\mix(L,R)$ be the string in $\{\ttL,\ttR\}^*$ that is obtained by taking the union $L \cup R$ in increasing order and replacing each element from $L$ by $\ttL$ and each element from $R$ by $\ttR$. For example, $\mix(\{2,3,8\},\{1,5\}) = \mathtt{RLLRL}$.
\end{definition}

\begin{definition}[number of switches]
Given a string $s \in \{\ttL,\ttR\}^*$, we define $\switches(s)$ as the number of side switches in $s$. Formally,
\[\switches(s) \coloneqq 
\#\{t \mid s_t \neq s_{t+1}\}.\]
For example,
$\switches(\mathtt{LLLRLL})=2$. Note that if we insert characters into $s$, $\switches(s)$ can only increase.
\end{definition}

\begin{definition}[mixing value]
Let $\mixValue(L,R)\coloneqq \switches(\mix(L,R))$ (see Figure~\ref{fig:mixvalue}).  
\end{definition}

\begin{figure}[h]
\centering
\newcommand{\pointL}[1]{
    \node[dot] (p#1) at (#1,2) {};
    \node[above=1.5mm of p#1] {$#1$};
}
\newcommand{\pointR}[1]{
    \node[dot] (p#1) at (#1,0) {};
    \node[below=1.5mm of p#1] {$#1$};
}
\newcommand{\surround}[3]{
    \pgfmathsetmacro{\rad}{.35}
    \pgfmathsetmacro{\bottom}{#1-\rad}
    \pgfmathsetmacro{\top}{#1+\rad}
    \fill[rounded corners, lightgray] (#2-\rad,\bottom) rectangle (#3+\rad,\top) {};
}
\newcommand{\surroundL}[2]{\surround{2}{#1}{#2}}
\newcommand{\surroundR}[2]{\surround{0}{#1}{#2}}
\begin{tikzpicture}[scale=0.5]
\surroundL{1}{3}
\surroundL{7}{7}
\pointL{1}
\pointL{3}
\pointL{7}
\surroundR{4}{6}
\surroundR{8}{9}
\pointR{4}
\pointR{6}
\pointR{8}
\pointR{9}
\draw [->] (p3) -- (p4);
\draw [->] (p6) -- (p7);
\draw [->] (p7) -- (p8);
\node at (-0.8,2) {$L$};
\node at (-0.8,0) {$R$};
\end{tikzpicture}
\caption{a visualization of $\mixValue(\{1,3,7\},\{4,6,8,9\}) = 3$}\label{fig:mixvalue}
\end{figure}

The mixing value has some convenient properties, which we will use later:
\begin{fact}[properties of $\mixValue$]\label{fact:props-mixvalue}
Function $\mixValue(L,R)$ is:
\begin{enumerate}[(a)]
\item symmetric: $\mixValue(L,R) = \mixValue(R,L)$;
\item monotone: if $L_1 \subseteq L_2$ and $R_1 \subseteq R_2$, then $\mixValue(L_1,R_1) \leq \mixValue(L_2,R_2)$;
\item superadditive under concatenation: if $L_1,R_1 \subseteq (-\infty,x]$ and $L_2,R_2 \subseteq [x,+\infty)$, then $\mixValue(L_1 \cup L_2, R_1 \cup R_2) \geq \mixValue(L_1,R_1) + \mixValue(L_2,R_2)$.
\end{enumerate}
Finally, $\mixValue(L,R) \leq 2\cdot\min(|L|,|R|)+1$.
\end{fact}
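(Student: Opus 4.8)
All four claims reduce to elementary bookkeeping about \emph{runs} (maximal blocks of a single symbol) in strings over $\{\ttL,\ttR\}$. Two observations do all the work: (i) a string with exactly $k$ switches splits into exactly $k+1$ maximal runs, and consecutive runs carry different symbols, so the run symbols alternate $\ttL,\ttR,\ttL,\dots$ (or $\ttR,\ttL,\ttR,\dots$); and (ii) the fact already recorded above that inserting a character into a string cannot decrease its number of switches (one checks this at the single junction where the new character lands: if the two neighbours already differed the count is unchanged, and otherwise it can only go up).

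Claim (a) is immediate: interchanging $L$ and $R$ in $\mix(\cdot,\cdot)$ relabels every $\ttL$ as $\ttR$ and vice versa, which does not change $\switches$. For (b), the inclusions $L_1\subseteq L_2$, $R_1\subseteq R_2$ and disjointness ($L_2\cap R_2=\emptyset$) imply that deleting from $\mix(L_2,R_2)$ the letters coming from $(L_2\setminus L_1)\cup(R_2\setminus R_1)$ yields exactly $\mix(L_1,R_1)$; equivalently, $\mix(L_2,R_2)$ is obtained from $\mix(L_1,R_1)$ by a succession of single-character insertions, so (ii) gives $\mixValue(L_1,R_1)\le\mixValue(L_2,R_2)$. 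For (c), the hypotheses $L_1,R_1\subseteq(-\infty,x]$ and $L_2,R_2\subseteq[x,+\infty)$ mean that enumerating $(L_1\cup L_2)\cup(R_1\cup R_2)$ in increasing order first runs through $L_1\cup R_1$ and then through $L_2\cup R_2$; hence $\mix(L_1\cup L_2,\,R_1\cup R_2)$ equals the concatenation $\mix(L_1,R_1)\cdot\mix(L_2,R_2)$, possibly after deleting one duplicated copy of $x$ (which merges two adjacent equal-symbol runs and therefore leaves $\switches$ unchanged). Since concatenating two strings creates at most one new switch, at the seam, $\switches(s\cdot s')\ge\switches(s)+\switches(s')$ for all strings $s,s'$, and superadditivity follows.

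For the last inequality, use (a) to assume $|L|\le|R|$, set $k\ce\mixValue(L,R)$, and let $s\ce\mix(L,R)$. By (i), $s$ has $k+1$ maximal runs with alternating symbols, so at least $\lfloor(k+1)/2\rfloor$ of these runs are $\ttL$-runs; they occupy pairwise disjoint sets of positions and each contains at least one $\ttL$, i.e., at least one element of $L$. Hence $|L|\ge\lfloor(k+1)/2\rfloor$, which gives $k\le 2|L|=2\min(|L|,|R|)\le 2\min(|L|,|R|)+1$; the degenerate cases $L=\emptyset$ or $R=\emptyset$ force $k=0$ and are trivial.

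There is no genuine obstacle here: the only steps with any subtlety are getting the alternation/floor bookkeeping right in the final inequality and handling the harmless boundary case in (c) where $x$ happens to lie in two of the four sets; everything else is routine.
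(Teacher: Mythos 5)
Your proof is correct, and the paper itself states this fact without proof (the only hint it gives is the remark, in the definition of $\switches$, that inserting characters cannot decrease the switch count --- which is exactly the observation your part (b) rests on). Your run-based bookkeeping for the final inequality, the subsequence/insertion argument for monotonicity, and the concatenation-with-duplicate-$x$ analysis for superadditivity are all the routine verifications one would expect, and I see no gaps.
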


The definitions of Wilber's Alternation and Funnel bounds ($\Alt(X), \Funnel(X)$) are standard in the literature.

We now give precise definitions of Wilber's two bounds.\footnote{These definitions may differ by a constant factor or an additive $\pm O(m)$ from the definitions the reader has seen before.
We will ignore such differences, because the cost of a BST also varies by $\pm O(m)$ depending on the definition, and the interesting regime is when $\OPT(X) = \omega(m)$.}

\begin{figure}[h]
\centering
\begin{tabu}{c}
\begin{tikzpicture}
\node[circle,draw] (u) at (3.125,4) {$u$};
\node[circle,draw] (v) at (1.75,3) {$v$};
\node[circle,draw] (w) at (4.5,3) {$w$};
\node[circle,draw] (x) at (2.5,2) {$x$};
\node[circle,draw] (k1) at (1,2) {$1$};
\node[circle,draw] (k2) at (2,1) {$2$};
\node[circle,draw] (k3) at (3,1) {$3$};
\node[circle,draw] (k4) at (4,2) {$4$};
\node[circle,draw] (k5) at (5,2) {$5$};
\newcommand{\arrowL}[2]{\draw[->] (#1) -- (#2) node[pos=.5, above left=-1mm and -1mm] {$\ttL{}$};}
\newcommand{\arrowR}[2]{\draw[->] (#1) -- (#2) node[pos=.5, above right=-1mm and -1mm] {$\ttR{}$};}
\newcommand{\children}[3]{\arrowL{#1}{#2}\arrowR{#1}{#3}}
\children{u}{v}{w}
\children{v}{k1}{x}
\children{w}{k4}{k5}
\children{x}{k2}{k3}
\end{tikzpicture}\\
reference tree $\cT$\\
\end{tabu}
\quad
\begin{tabu}{cccc}
Node & Link used by each access & Group by letter& $\#$\\
\midrule
$u$ & $\mathtt{R,L,L,R,R,L}$ & $\mathtt{[R],[L,L],[R,R],[L]}$ & 3\\%
$v$ & $\mathtt{L,R,R}$ & $\mathtt{[L],[R,R]}$ & 1\\
$w$ & $\mathtt{L,R,L}$ & $\mathtt{[L],[R],[L]}$ & 2\\
$x$ & $\mathtt{R,L}$ & $\mathtt{[R],[L]}$ & 1\\
\midrule
Total & & & 7\\
\end{tabu}

\caption{For access sequence $X = (4,1,3,5,4,2)$ and reference tree $\cT$, $\ALT_\cT(X) = 7$.}\label{fig:alt}
\end{figure}

\begin{definition}[\IB{}]\label{def:ib}
Let $P$ be a point set with distinct $y$-coordinates, and let $\cT$ be a binary search tree over the values $P.x$.
We define $\ALT_\cT(P)$ using the recursive structure of $\cT$. If $\cT$ is a single node, let $\ALT_\cT(P)\coloneqq 0$. Otherwise, let $\TL$ and $\TR$ be the left and right subtrees at the root. Partition $P$ into two sets $\PL\coloneqq \{p \in P \mid p.x \in \TL\}$ and $\PR \coloneqq \{p \in P \mid p.x \in \TR\}$ and consider the quantity
$\mixValue(\PL.y, \PR.y)$, which
describes how much $\PL$ and $\PR$ are interleaved with time (we call each switch between $\PL$ and $\PR$ a ``preferred child alternation''). Then
\begin{equation}\label{eq:ib}
\ALT_\cT(P) \coloneqq \mixValue(\PL.y, \PR.y) + \ALT_\TL(\PL) + \ALT_\TR(\PR).
\end{equation}
The alternation bound is then defined as the maximum over all trees:
$$
\ALT(P)\ce \max_{\cT} \ALT_{\cT}(P).
$$
In addition, for an access sequence $X$, let $\ALT_\cT(X) \coloneqq \ALT_\cT(G_X)$ and $\ALT(X) \coloneqq \ALT(G_X)$.
\end{definition}

For the next few definitions, it is helpful to refer back to Figure~\ref{fig:fb}. In particular, $\FL$ and $\FR$ (the left and right funnel) correspond to the points marked with $\ttL$ and $\ttR$.

\begin{figure}[h]
\centering
\newcommand{\diskColor}{black!40}
\newcommand{\shadowColor}{lightgray}
\newcommand{\bottom}{0}
\newcommand{\funnel}[3]{
    \fill[\diskColor] (#1,#2) circle (3mm);
    \node at (#1,#2) (C) {};
    \node[above=0mm of C] {#3};
}
\newcommand{\funnelL}[2]{
    \fill[lightgray] (#1,#2) rectangle (-0.5,\bottom);
    \funnel{#1}{#2}{$\ttL$}
}
\newcommand{\funnelR}[2]{
    \fill[lightgray] (#1,#2) rectangle (8.5,\bottom);
    \funnel{#1}{#2}{$\ttR$}
}

\newcommand{\access}[2]{\node[cross] (p#2) at (#1,#2) {};}
\begin{tabu}{c}
\begin{tikzpicture}[scale=0.5]
\funnelL{3}{3}
\funnelL{2}{7}
\funnelL{1}{8}
\funnelR{5}{4}
\funnelR{7}{6}
\fill[\shadowColor] (-0.5,\bottom) rectangle (8.5,1);
\access{4}{1}
\access{6}{2}
\access{3}{3}
\access{5}{4}
\access{1}{5}
\access{7}{6}
\access{2}{7}
\access{1}{8}
\access{4}{9}
\access{6}{10}
\access{3}{11}
\node[above=0mm of p9] {$p$};
\end{tikzpicture}\\
the funnel of $p$ has 5 points
\end{tabu}
\quad
\begin{tabu}{l}
By increasing $y$-coordinate: $\mathtt{L,R,R,L,L}$.\\
The side switches twice, so $f(P,p) = 2$.
\end{tabu}

\caption{Computing $f(P,p)$ for $p=(4,9)$ in the geometric view of $X=(4,6,3,5,1,7,2,1,4,6,3)$. Notice how the funnel points form a staircase-like front on either side of $p$.}\label{fig:fb}
\end{figure}

\begin{definition}[left and right funnel]\label{def:lr-funnel}
Let $P$ be a point set. For each $p \in P$, we say that access $q \in P$ is in the left (resp. right) funnel of $p$ within $P$ if $q$ is to the lower left (resp. lower right) of $p$ and $\rect{pq}$ is empty. Formally, let
\[\FL \coloneqq \{q \in P \mid q.y < p.y \,\wedge\, q.x < p.x \,\wedge\, P \cap \rect{pq} = \{p,q\}\}\]
and
\[\FR \coloneqq \{q \in P \mid q.y < p.y \,\wedge\, q.x > p.x \,\wedge\, P \cap \rect{pq} = \{p,q\}\}.\]
We will collectively call $\FL \cup \FR$ the funnel of $p$ within $P$.
\end{definition}

\begin{definition}[\FB{}]\label{def:fb}
Let $P$ be a point set with distinct $y$-coordinates. For each $p \in P$, define quantity
\[f(P,p) \coloneqq \mixValue(\FL.y, \FR.y),\]
which describes how much the left and right funnel of $p$ are interleaved in time.

Then
\[\W(P) \coloneqq \sum_{p \in P} f(P,p).\]
In addition, for an access sequence $X$, let $\W(X) \coloneqq \W(G_X)$.
\end{definition}

\begin{definition}[amortized bounds]
For any sequence $X \in S^m$, define amortized versions of the optimal cost and the Wilber bounds:
$$
\al{
\ol{\OPT}(X) \ce \f{\OPT(X)}{m}, \quad
\ol{\Alt}(X) \ce \f{\Alt(X)}{m}, \quad
\ol{\Funnel}(X) \ce \f{\Funnel(X)}{m}.
}
$$
\end{definition}

\begin{definition}[composed sequence, see \cite{chalermsook2016landscape}]\label{def:composition}
Let $S_1, \ldots, S_l$ be disjoint sets of keys with increasing values (i.e. $\forall x \in S_{j}, x' \in S_{j+1}$, we have $x < x'$).
For each $j \in [l]$, let $X^{(j)} \in S_j^{m_j}$ be an access sequence with keys in $S_j$, and
let $\tld{X}$ be a sequence with keys in $[l]$ such that every $j \in [l]$ appears exactly $m_j$ times (its total length is $m \ce m_1 + \cdots + m_l$).
Then we define the composed sequence
$$
X = \tld{X}(X^{(1)}, \ldots, X^{(l)}) \in (S_1 \union \cdots \union S_l)^m
$$
as the sequence that interleaves $X^{(1)}, \ldots, X^{(l)}$ according to the order given by $\tld{X}$: that is,
$
X_t = X^{\P{\tld{X}_t}}_{\sigma(t)}
$
where $\sigma(t) \ce \#\SETCO{t' \le t}{\tld{X}_{t'} = \tld{X}_t}$.
\end{definition}

Note that \cite{ChalermsookCS20} defines the \textit{decomposition} operation, which is the inverse operation of the composition. We will use \Cref{def:composition} throughout this paper.

\begin{definition}[$j_x$]
In the context of \Cref{def:composition}, for any key $x \in S_1 \union \cdots \union S_l$, let $j_x$ be the unique index such that $x \in S_{j_x}$.
\end{definition}
\section{Effect of Composition on Wilber's bounds}
\label{sec:composition}
We prove \Cref{thm:direct sums} in this section. Namely, we show that Wilber's bounds act nicely under composition, allowing us to boost the separation between them in \Cref{sec:boosting}. We divide the proofs into the following two theorems. %

\begin{theorem}[subadditivity of $\Alt$]\label{thm:subadditivity-alt}
    Let $X \ce \tld{X}(X^{(1)}, \ldots, X^{(l)})$ be a composed sequence with $|X^{(1)}| = \cdots = |X^{(l)}|$.\footnote{We make this assumption so that the proof is simpler.} %
    Then
    $$
    \Alt(X) \le \Alt\P{\tld{X}} + \sum_j \Alt\P{X^{(j)}} + O(m).
    $$
\end{theorem}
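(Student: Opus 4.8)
The plan is to prove the bound by the natural inductive/recursive strategy suggested by the recursive definition of $\ALT_\cT$. Fix any reference tree $\cT$ over the key set $S_1 \cup \cdots \cup S_l$; we want to show $\ALT_\cT(G_X) \le \ALT(\tld X) + \sum_j \ALT(X^{(j)}) + O(m)$. The key structural observation is that $\cT$ can be decomposed relative to the block structure $S_1 \prec \cdots \prec S_l$: there is a ``top part'' of $\cT$, namely the minimal subtree spanned by the $l-1$ ``cut values'' separating consecutive blocks, whose nodes each either separate two groups of blocks or sit inside a single block, and below the cut points hang $l$ subtrees $\cT^{(1)}, \ldots, \cT^{(l)}$, where $\cT^{(j)}$ is a BST over $S_j$ (possibly with a few extra ``boundary'' nodes from neighbouring blocks, which I will account for in the $O(m)$ slack). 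The contributions of $\ALT_\cT$ split accordingly: nodes entirely inside one block $S_j$ contribute at most $\ALT_{\cT^{(j)}}(G_{X^{(j)}}) \le \ALT(X^{(j)})$, because the $y$-coordinates of the accesses landing in $S_j$, restricted to that sub-BST, are exactly an order-isomorphic copy of $G_{X^{(j)}}$ (the time-stamps get rescaled by the interleaving, but $\mixValue$ depends only on the relative order); and the ``separating'' nodes in the top part contribute something I want to charge to $\ALT_\cT'(\tld X)$ for a suitable reference tree $\cT'$ on $[l]$, plus $O(m)$.

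The main technical step is the second bullet: bounding the total mixing value over separating nodes by $\ALT(\tld X) + O(m)$. At a separating node $v$ in the top part, the left subtree contains blocks $S_a, \ldots, S_b$ and the right subtree contains $S_{b+1}, \ldots, S_c$ for some $a \le b < c$; the relevant mixing value is $\mixValue(\PL.y, \PR.y)$ where $\PL.y$ is the set of times at which we access a key in $S_a \cup \cdots \cup S_b$. But this is exactly, up to the order-isomorphism between time-stamps of $X$ and positions in $\tld X$, the quantity $\mixValue((\tld\PL).y, (\tld\PR).y)$ for the corresponding node in the tree $\cT'$ over $[l]$ obtained by contracting each block to a single leaf. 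Here the assumption $|X^{(1)}| = \cdots = |X^{(l)}|$ is what keeps things clean: because each queue is dequeued equally often, the template $\tld X$ has a well-behaved structure, though honestly I think the argument goes through with only cosmetic changes without it — this is presumably why the authors flag it as a simplifying assumption. So choosing $\cT'$ to be this contracted tree, the separating nodes of $\cT$ contribute exactly $\ALT_{\cT'}(G_{\tld X}) \le \ALT(\tld X)$, with no slack needed at all for this part — the $O(m)$ slack is absorbing (i) the at most $2(l-1)$ ``boundary'' nodes where a block's BST is entangled with a neighbour (each such node contributes $\mixValue \le 2\min(\cdot,\cdot)+1$, and summed these are $O(m)$ since $l \le m$ and each access is counted $O(1)$ times), and (ii) any off-by-constant discrepancies in how $\ALT$ is normalized.

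The step I expect to be the main obstacle is making the decomposition of the arbitrary reference tree $\cT$ fully rigorous — precisely, identifying which nodes are ``separating,'' which are ``internal to a block,'' and which are ``boundary,'' and then verifying that the recursive sum $\ALT_\cT(P) = \mixValue(\PL.y,\PR.y) + \ALT_\TL(\PL) + \ALT_\TR(\PR)$ telescopes into exactly the three groups with the claimed bounds. The subtle point is that a sub-BST of $\cT$ rooted below the top part need not be a BST over exactly $S_j$: its key range may straddle a block boundary, picking up a suffix of $S_{j-1}$ or a prefix of $S_{j+1}$. I would handle this by either (a) arguing one can bound $\ALT$ of such a ``straddling'' subtree by $\ALT(X^{(j)})$ plus a per-boundary $O(m)$ term using monotonicity (Fact~\ref{fact:props-mixvalue}(b)) and the cap $\mixValue \le 2\min(|L|,|R|)+1$, or (b) preprocessing $\cT$ into a tree where block boundaries align with subtree boundaries at a cost of $O(m)$ in $\ALT_\cT$ — the latter is cleaner if such a ``rebalancing'' lemma is available. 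Everything else (the order-isomorphism claims, the reduction to the contracted tree on $[l]$) is routine once the tree surgery is pinned down.
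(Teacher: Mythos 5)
Your high-level plan coincides with the paper's: decompose the reference tree $\calT$ into per-block trees and a contracted tree over $[l]$, charge within-block alternations to $\Alt\P{X^{(j)}}$, cross-block alternations to $\Alt\P{\tld{X}}$, and absorb the rest into $O(m)$. The gap is precisely in the part you flag as the ``main obstacle'': your accounting of the straddling/boundary contribution is quantitatively wrong, and this is where all the actual work of the proof lives. First, the set of nodes whose key interval straddles a block boundary is not of size $O(l)$: for each of the $l-1$ boundaries it is an entire root-to-node path in $\calT$, so there can be up to $(l-1)\cdot\mathrm{depth}(\calT)$ such nodes, each with mixing value as large as $\Theta(\min(|\PL|,|\PR|))$; neither crude bound yields $O(m)$. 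Second, your claim that the ``separating'' nodes contribute \emph{exactly} the contracted-tree alternations ``with no slack'' is false: a separating node's own key $b$ lies in some block $S_{j_b}$, which is therefore split between its two subtrees, so an access to a small key of $S_{j_b}$ followed by a large key of $S_{j_b}$ produces an alternation at $b$ that corresponds to \emph{no} alternation in $\tld{X}$ (both accesses map to the same symbol $j_b$). The assertion that ``each access is counted $O(1)$ times'' among boundary alternations is exactly what needs proof, not an observation.

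The paper resolves this not by classifying \emph{nodes} but by classifying each individual \emph{alternation} (at a node $b$, witnessed by accesses $x,y$) into four types, and the two intermediate types require genuine charging arguments that are absent from your proposal: alternations where exactly one of $x,y$ shares $b$'s block are charged injectively to the access $t_x$ (or $t_y$) via the uniqueness of ``the highest ancestor of $x$ lying in $S_{j_x}$ with $x<b$'', giving $O(m)$ in total; and alternations where $j_x$ is the closest left ancestor-block of $j_b$ in the contracted tree are charged to $j_b$, at most $\max_j|X^{(j)}|$ per choice of $j_b$, which is where the assumption $|X^{(1)}|=\cdots=|X^{(l)}|$ is actually used (giving $l\cdot m/l=m$). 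Your guess that the equal-size assumption concerns the structure of $\tld{X}$ is off target. Without these two charging arguments, the $O(m)$ bound on the cross-block remainder does not follow, so the proposal as written does not establish the theorem.
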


We prove \Cref{thm:subadditivity-alt} in \Cref{sec:subaddib}.

\begin{theorem}[superadditivity of $\Funnel$]\label{thm:superadditivity-funnel}
    Let $X \ce \tld{X}(X^{(1)}, \ldots, X^{(l)})$ be a composed sequence. Then
    $$
    \Funnel(X) \ge \Funnel\P{\tld{X}} + \sum_j \Funnel\P{X^{(j)}} - O(m).
    $$
\end{theorem}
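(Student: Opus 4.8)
The plan is to prove the (stronger, error-free) inequality $\Funnel(X) \ge \Funnel\P{\tld X} + \sum_j \Funnel\P{X^{(j)}}$; the additive $O(m)$ slack in the statement then comes for free. First I set up the geometry. Since the key sets satisfy $S_1 < \cdots < S_l$ and occupy consecutive intervals of the key axis, $G_X$ decomposes into $l$ vertical strips, and the restriction of $G_X$ to strip $j$ is order-isomorphic to $G_{X^{(j)}}$: its $x$-coordinates are shifted by a constant, and its $y$-coordinates are relabeled by the increasing map $\pi_j$ sending $s$ to the time of the $s$-th dequeue from queue $j$. The template $G_{\tld X}$ lives on the keys $[l]$ and, by construction, has exactly the same set of $y$-coordinates (times) as $G_X$. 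Thus each access $p \in G_X$ lies in a unique strip $j = j_p$, and I write $p^{(j)} \in G_{X^{(j)}}$ for its pre-image and $\tld p \ce (j, p.y) \in G_{\tld X}$ for the template point at the same time. The heart of the matter is the pointwise bound
\[ f(G_X, p) \;\ge\; f\P{G_{X^{(j)}}, p^{(j)}} + f\P{G_{\tld X}, \tld p}; \]
summing it over $p$, grouping the first terms by strip (using that $p \mapsto p^{(j)}$ is a bijection from strip-$j$ accesses onto $G_{X^{(j)}}$) and recognizing the second terms as $\sum_{\tld p \in G_{\tld X}} f(G_{\tld X}, \tld p) = \Funnel(\tld X)$ then finishes the theorem.

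To prove the pointwise bound I would argue in three steps. \textbf{(i) The strip-$j$ part of $p$'s funnel recovers $f(G_{X^{(j)}}, p^{(j)})$.} For two accesses $q,p$ both in strip $j$, the rectangle $\rect{pq}$ stays inside strip $j$'s key interval, which no other strip meets; hence $\rect{pq}$ is empty in $G_X$ iff it is empty within strip $j$, iff $\rect{p^{(j)}q^{(j)}}$ is empty in $G_{X^{(j)}}$. So the set of funnel points of $p$ lying in strip $j$ is precisely the $\pi_j$-image of the funnel of $p^{(j)}$, with left/right sides and the $y$-order preserved; its mixing value equals $f(G_{X^{(j)}}, p^{(j)})$. \textbf{(ii) The funnel of $\tld p$ embeds into $p$'s funnel.} Suppose $\tld q = (j', s)$ lies in, say, the left funnel of $\tld p$ in $G_{\tld X}$, so the box $[j', j] \times [s, p.y]$ contains no template point except for the two endpoints, and let $q$ be the strip-$j'$ access of $G_X$ at time $s$. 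Reading this emptiness strip by strip shows that $[q.x, p.x] \times [s, p.y]$ contains no point of $G_X$ except $q$ and $p$, i.e.\ $q$ lies in the left funnel of $p$ in $G_X$ (symmetrically on the right). Since distinct template funnel points occur at distinct times, this gives an injection of the funnel of $\tld p$ into that of $p$ preserving sides and $y$-coordinates, so the mixing value of its image equals $f(G_{\tld X}, \tld p)$.

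\textbf{(iii) The two images are separated in time.} For any funnel point $q^*$ of $p$ lying in strip $j$ and any template funnel point $\tld q$ of $\tld p$, I claim $q^*.y < \tld q.y$: the box $\rect{\tld p \tld q}$ has a corner of key $j$ at time $p.y$, so by emptiness it contains no other key-$j$ template point at any time in $[\,\tld q.y,\, p.y\,)$, whereas $q^*$ is a strip-$j$ access at a time $q^*.y < p.y$, forcing $q^*.y < \tld q.y$. Hence some threshold $\tau$ separates all $y$-coordinates of the strip-$j$ part of $p$'s funnel (below) from all $y$-coordinates of the embedded copy of $\tld p$'s funnel (above). Now combine: monotonicity of $\mixValue$ gives that $f(G_X, p)$ is at least the mixing value of (strip-$j$ part) $\cup$ (template-image part) taken side by side, and superadditivity of $\mixValue$ under concatenation at $\tau$ (\Cref{fact:props-mixvalue}(b) and (c)) splits this as $f(G_{X^{(j)}}, p^{(j)}) + f(G_{\tld X}, \tld p)$, which is the pointwise bound.

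I expect step (iii) to be the crux. It is the only place where the within-block funnel geometry of $X^{(j)}$ and the block ordering encoded by $\tld X$ genuinely interact, and producing a single threshold that simultaneously dominates both the ``same-side'' and the ``opposite-side'' portions of the funnel is exactly what makes the concatenation-superadditivity of $\mixValue$ applicable. Steps (i) and (ii) are essentially bookkeeping about which strips a rectangle spans, and the closing summation is a routine re-indexing.
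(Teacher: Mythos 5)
Your proposal is correct and follows essentially the same route as the paper's proof: your ``strip-$j$ part'' and ``template-image part'' of $p$'s funnel are exactly the paper's \emph{internal} and \emph{external} funnels, your step (iii) is the paper's observation that every internal funnel point precedes every external funnel point in time (which the paper proves by the same rectangle-emptiness argument, phrased as a contradiction), and the concluding use of monotonicity and concatenation-superadditivity of $\mixValue$ matches the paper's ``no double counting of side switches'' step. Your write-up is somewhat more formal than the paper's picture-based argument, but there is no substantive difference in the proof idea.
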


We prove \Cref{thm:superadditivity-funnel} in \Cref{sec:appendix:supperadditivity funnel}. 

\subsection{Subadditivity of the \IB{}} \label{sec:subaddib}
We prove \Cref{thm:subadditivity-alt} in this section.
\paragraph*{Proof plan}
We will show that for any binary search tree $\calT$ over $S_1 \union \cdots \union S_l$,
$$
\Alt_\calT(X) \le \Alt\P{\tld{X}} + \sum_j \Alt\P{X^{(j)}} + O(m).
$$
We will do this by
\begin{itemize}
    \item decomposing $\calT$ into the corresponding binary trees $\tld\calT$ over $[l]$ and $\calT_j$ over $S_j$ for all $j$;
    \item classifying preferred child alternations in $\calT$ into 4 types, which correspond to either
    \begin{itemize}
        \item alternations in $\tld\calT$,
        \item alternations in $\calT_j$ for some $j$,
        \item or to some other events that happen at most $O(m)$ times in aggregate.
    \end{itemize}
\end{itemize}
That is, we will show that
$$
\Alt_\calT(X) \le \Alt_{\tld\calT}\P{\tld{X}} + \sum_j \Alt_{\calT_j}\P{X^{(j)}} + O(m).
$$

\subsubsection{Decomposing the tree}
For a tree $\calT$, we write $x \prec_\calT b$ if $x$ is a descendent of $b$ in $\calT$, and we write $S \prec_\calT b$ if $x \prec_\calT b$ for all $x \in S$.

\begin{definition}\label{def:Tj}
    Let $\calT_j$ be the unique binary search tree over $S_j$ such that if $b,x \in S_j$ and $x \prec_\calT b$ then $x \prec_{\calT_j} b$.
\end{definition}

$\calT_j$ is constructed by running the following recursive algorithm, which builds a tree $\Tout$:
\begin{itemize}
\item Start at the root of $\calT$, and let $x$ be the current node.
\item If $x \in S_j$, then
    \begin{itemize}
	\item make $x$ the root of $\Tout$;
	\item form $x$'s left subtree in $\Tout$ by recursing on $x$'s left subtree in $\calT$;
	\item form $x$'s right subtree in $\Tout$ by recursing on $x$'s right subtree in $\calT$.
    \end{itemize}
\item If $x \not\in S_j$ then since $S_j$ is contiguous, at most one of $x$'s left and right subtrees in $\calT$ can contain elements from $S_j$.
    \begin{itemize}
	\item If there is one such subtree, form $\Tout$ by recursing on it.
	\item Otherwise let $\Tout$ be the empty tree.
    \end{itemize}
\end{itemize}

This algorithm clearly has the desired properties:
\begin{itemize}
\item Clearly, by construction, $\calT_j$ is a binary search tree and its set of keys is $S_j$.
\item If $b,x \in S_j$ and $x \prec_\calT b$, then $x \prec_{\calT_j} b$, because the only way to get to $x$ is to first pass through $b$, add it as a root of the current subtree, then recurse on $b$'s subtree that contains $x$, which eventually adds $x$ to $\calT_j$ as a descendent of $b$.
\item $\calT_j$ is unique since the only nontrivial choice the algorithm makes is to add $x$ as a root, but this is necessary since it must be an ancestor of all of the keys that later get added to this part of $\calT_j$.
\end{itemize}

\begin{definition}\label{def:tldT}
    Let $\tld\calT$ be the unique binary search tree over $[l]$ such that if $x \prec_\calT b$ and $S_{j_b}, S_{j_x} \prec_\calT b$ then $j_x \prec_{\tld\calT} j_b$.
\end{definition}
$\tld\calT$ is constructed by running the following recursive algorithm, which builds a tree $\Tout$:
\begin{itemize}
\item Start at the root of $\calT$, and let $x$ be the current node.
\item If $j_x$ hasn't already been seen earlier in the algorithm (which happens iff $x$ is the lowest common ancestor of all of $S_{j_x}$ in $\calT$), then
    \begin{itemize}
	\item make $j_x$ the root of $\Tout$;
	\item form $j_x$'s left subtree in $\Tout$ by recursing on $x$'s left subtree in $\calT$;
	\item form $j_x$'s right subtree in $\Tout$ by recursing on $x$'s right subtree in $\calT$.
    \end{itemize}
\item If $j_x$ has already been seen earlier in the algorithm, then some ancestor of $x$ was also in $S_{j_x}$, and $S_{j_x}$ is contiguous, so $S_{j_x}$ must contain either $x$'s entire left subtree, or $x$'s entire right subtree. That means that at most one of $x$'s subtrees can contain elements *not* in $S_{j_x}$.
    \begin{itemize}
	\item If there is one such subtree, form $\Tout$ by recursing on it.
	\item Otherwise, let $\Tout$ be the empty tree.
    \end{itemize}
\end{itemize}

This algorithm clearly has the desired properties:
\begin{itemize}
\item Clearly, by construction, $\tld\calT$ is a binary search tree and its set of keys is $[l]$.
\item If $x \prec_\calT b$ and $S_{j_b}, S_{j_x} \prec_\calT b$ then:
    \begin{itemize}
	\item We can assume $j_x \ne j_b$ and thus $x \ne b$, otherwise the claim is trivially true.
	\item Since $S_{j_b} \prec_\calT b$, $b$ must be the lowest common ancestor of all of $S_{j_b}$ in $\calT$, so $j_b$ gets added to $\tld\calT$ when the algorithm is looking at node $b$.
	\item Also, since $S_{j_x} \prec_\calT b$ and $j_x \ne j_b$, that means that all the elements from $S_{j_x}$ are descendents of $b$, so $j_x$ will be added to $\tld\calT$ in one of the recursive branches launched when looking at $b$.
	\item Therefore $j_x$ will be a descendent of $j_b$ in $\tld\calT$.
    \end{itemize}
\item $\tld\calT$ is unique since the only nontrivial choice the algorithm makes is to add $j_x$ as root when it first sees an element from $S_{j_x}$, but this is necessary since for any $j$ which will eventually be added to this part of the tree, $S_j$ must have been completely contained in $x$'s subtree, and therefore $j_x$ must be an ancestor of $j$.
\end{itemize}

\subsubsection{Stating the classification}
Consider some left-to-right\footnote{The case where the alternation occurs from right to left is analogous.} preferred child alternation that $X$ produces in $\calT$.
That is, take some value $b \in S_1 \union \cdots \union S_l$ and some times $t_x < t_y \in [m]$ such that
\begin{itemize}
    \item $x \ce X_{t_x}$ is in the left subtree of $b$ in $\calT$,
    \item $y \ce X_{t_y}$ is in the right subtree of $b$ in $\calT$,
    \item and none of the accesses $X_{t_x+1}, \ldots, X_{t_y-1}$ made in the interim were to values that are strict descendents of $b$.
\end{itemize}
Let $a$ be the lowest ancestor of $b$ such that $a<b$ and $c$ be the lowest ancestor of $b$ such that $b<c$.\footnote{If either $a$ or $c$ doesn't exist, let $a \ce -\infty$ or $c \ce +\infty$ by convention.} This means that the left and right subtrees of $b$ correspond to the keys in intervals $(a,b)$ and $(b,c)$. We have $x,y \prec_\calT b$, $x \in (a,b)$, and $y \in (b,c)$.

\begin{claim}\label{clm:classification}
One of the following must hold (from most ``local'' to most ``global''):
\begin{enumerate}
\item All of $b,x,y$ are in the same range $S_{j_b}$, so $x$ and $y$ are in the left and right subtrees of $b$ in $\calT_{j_b}$, and this corresponds to an alternation in $\calT_{j_b}$.
\item $b$ is either the highest ancestor of $x$ such that $b \in S_{j_x}$ and $x<b$, or the highest ancestor of $y$ such that $b \in S_{j_y}$ and $b < y$. %
\item $S_{j_b} \prec_\calT b$, and either $j_x$ is the closest (in key value) ancestor of $j_b$ in $\tld\calT$ such that $j_x<j_b$, or $j_y$ is the closest (in key value) ancestor of $j_b$ in $\tld\calT$ such that $j_b < j_y$.
\item All of $b,x,y$ are in different ranges (i.e. $j_x < j_b < j_y$), $j_x$ is in $j_b$'s left subtree in $\tld\calT$, and $j_y$ is in $j_b$'s right subtree in $\tld\calT$, so this corresponds to an alternation in $\tld\calT$.
\end{enumerate}
\end{claim}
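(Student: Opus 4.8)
The plan is to do a careful case analysis on the relationships among $j_x$, $j_b$, and $j_y$, using the defining properties of $\calT_{j_b}$ (Definition~\ref{def:Tj}) and $\tld\calT$ (Definition~\ref{def:tldT}), together with the contiguity of the blocks $S_1, \ldots, S_l$. Recall we have a value $b$ with lowest $<b$-ancestor $a$ and lowest $>b$-ancestor $c$, so the two subtrees of $b$ in $\calT$ are exactly the keys in $(a,b)$ and $(b,c)$; and we have $x \in (a,b)$, $y \in (b,c)$, $x,y \prec_\calT b$. Since $x < b < y$ and the blocks are contiguous, we always have $j_x \le j_b \le j_y$, so there are only three regimes: (i) $j_x = j_b = j_y$; (ii) exactly one of $j_x = j_b$ or $j_b = j_y$ holds; (iii) $j_x < j_b < j_y$ strictly. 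I would show these fall into cases 1, 2, and (3 or 4) respectively.

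First, suppose $j_x = j_b = j_y =: j$. Then $b, x, y \in S_j$ and $x,y \prec_\calT b$, so by Definition~\ref{def:Tj} we get $x \prec_{\calT_j} b$ and $y \prec_{\calT_j} b$. Since $x < b < y$, $x$ is in the left subtree of $b$ in $\calT_j$ and $y$ is in the right subtree, giving case~1. Second, suppose WLOG $j_x < j_b = j_y =: j$ (the case $j_x = j_b < j_y$ is symmetric). I would argue that $b$ is the highest ancestor of $y$ in $\calT$ lying in $S_j$ with $b < y$: indeed $b \prec_\calT$-ancestor of $y$ and $b < y$, $b \in S_j$; if some strictly higher ancestor $b'$ of $y$ also had $b' \in S_j$ and $b' < y$, then since $b'$ is an ancestor of $b$ and $b' < b$ (as $b' \le$ everything in its relevant range... — more carefully, $b'$ lies between... ), one derives that the whole left subtree of $b$ up to the block boundary lies in $S_j$, forcing $j_x = j$, a contradiction. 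This puts us in case~2 (the disjunct for $y$).

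Third, suppose $j_x < j_b < j_y$ strictly. Then all of $x,y$ lie in blocks different from $S_{j_b}$, and since $x,y \prec_\calT b$ with $x$ on $b$'s left and $y$ on $b$'s right, $b$ is the lowest common ancestor in $\calT$ of $S_{j_x} \cup S_{j_b} \cup S_{j_y}$ — in particular $S_{j_b} \prec_\calT b$ and $b$ is where $j_b$ gets inserted into $\tld\calT$. Now apply Definition~\ref{def:tldT}: since $x \prec_\calT b$ and $S_{j_b}, S_{j_x} \prec_\calT b$ (the latter because all of $S_{j_x}$ lies left of $b$ and below $b$ once $b$ is the LCA of everything in play), we get $j_x \prec_{\tld\calT} j_b$; symmetrically $j_y \prec_{\tld\calT} j_b$. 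Since $j_x < j_b < j_y$, $j_x$ is in $j_b$'s left subtree in $\tld\calT$ and $j_y$ is in its right subtree, which is case~4 — \emph{unless} $j_x$ or $j_y$ is "too close", i.e. the reason $S_{j_x} \prec_\calT b$ might fail is that $a$ (the lowest $<b$-ancestor of $b$) satisfies $j_a = j_x$, so part of $S_{j_x}$ is not below $b$; in that situation $j_x$ is the closest-in-value ancestor of $j_b$ in $\tld\calT$ below $j_b$, which is exactly case~3. Splitting on whether $j_a = j_x$ and whether $j_c = j_y$ separates cases~3 and~4.

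The main obstacle I anticipate is the third paragraph: pinning down exactly when $S_{j_x} \prec_\calT b$ fails and showing that in precisely that situation $j_x$ is the \emph{closest-in-key-value} ancestor of $j_b$ in $\tld\calT$ on the left (and not merely some ancestor), and conversely that when it does not fail we land in the genuinely "global" case~4. This requires chasing the construction of $\tld\calT$ to understand which block-index becomes $j_b$'s parent, using that the ancestors $a$ and $c$ of $b$ are the immediate key-neighbors bounding $b$'s subtree, so their blocks $j_a, j_c$ are the candidates for $j_b$'s neighbors in $\tld\calT$. The other cases are routine once the contiguity-of-blocks observation $j_x \le j_b \le j_y$ is in hand.
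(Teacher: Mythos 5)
Your case skeleton is exactly the paper's: split on $j_x \le j_b \le j_y$ (which follows from $x<b<y$ and contiguity of the blocks) into the three regimes, handle the middle regime by a highest-ancestor uniqueness argument, and in the strict regime $j_x < j_b < j_y$ sub-split on whether $j_a = j_x$ or $j_c = j_y$ to separate types 3 and 4. Cases 1 and 2 are fine as sketched (for case 2 the cleanest route is: any ancestor $b'$ of $b$ lies outside $(a,c)$, so a higher ancestor on the relevant side forces $j_{b'} \le j_a \le j_x < j_b$ or $j_{b'} \ge j_c \ge j_y > j_b$, hence $b' \notin S_{j_b}$; your contiguity argument also works).

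However, the step you explicitly defer --- showing that $j_x$ (resp.\ $j_y$) is the \emph{closest-in-key-value} left (resp.\ right) ancestor of $j_b$ in $\tld\calT$ in the type-3 subcase --- is the actual content of the claim, and it is left unproven. Two things need fixing. First, your assertion that ``$b$ is the lowest common ancestor in $\calT$ of $S_{j_x} \cup S_{j_b} \cup S_{j_y}$'' is false precisely in the type-3 situation (when $j_a = j_x$, part of $S_{j_x}$ sits above or outside $b$'s subtree); the correct and sufficient statement is only $S_{j_b} \prec_\calT b$, which follows from $j_a < j_b < j_c$ and block contiguity, so $S_{j_b} \subseteq (a,c)$. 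Second, the missing lemma is closed as follows: (i) the lowest common ancestor of $S_{j_a}$ in $\calT$ is an ancestor of $a$, hence of $b$, so by the defining property of $\tld\calT$ the index $j_a$ is an ancestor of $j_b$ in $\tld\calT$; (ii) by the construction of $\tld\calT$, \emph{every} ancestor of $j_b$ in $\tld\calT$ has the form $j_z$ for some ancestor $z$ of $b$ in $\calT$; since $a$ is the closest left key-neighbor among $b$'s ancestors in $\calT$, no such $j_z$ with $j_a < j_z < j_b$ can exist, so $j_a$ is the closest left ancestor of $j_b$ in $\tld\calT$ (and symmetrically for $j_c$). With that in hand, $j_a = j_x$ immediately gives type 3, and $j_a < j_x < j_b < j_y < j_c$ gives $S_{j_x} \subseteq (a,b)$ and $S_{j_y} \subseteq (b,c)$, hence $j_x, j_y \prec_{\tld\calT} j_b$ on the correct sides, which is type 4. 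You correctly located the crux but did not supply this argument.
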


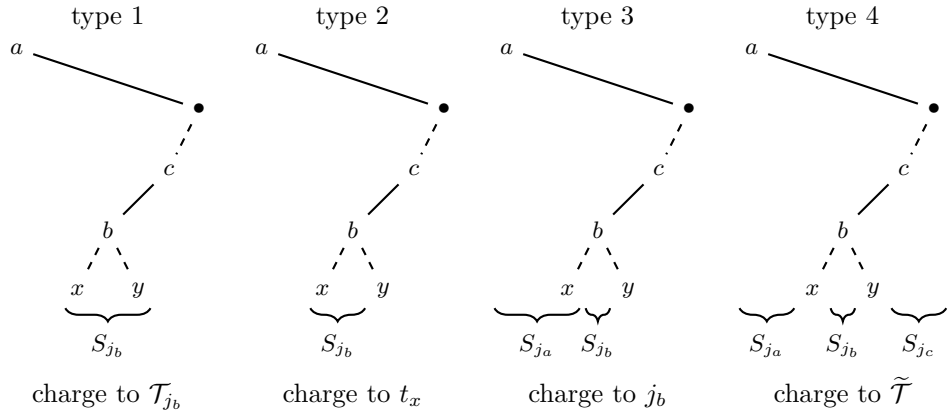
\begin{figure}[h]
\newcommand{\drawTree}{
\node (a) at (-0.5,5) {$a$};
\node (d) at (2.5,4) {$\bullet$};
\node (c) at (2,3) {$c$};
\node (b) at (1,2) {$b$};
\node (x) at (0.5,1) {$x$};
\node (y) at (1.5,1) {$y$};
\draw (a) -- (d);
\draw [dashed] (d) -- (c);
\draw (c) -- (b);
\draw [dashed] (b) -- (x);
\draw [dashed] (b) -- (y);
}

\centering
\begin{tabu}{cccc}
type 1 & type 2 & type 3 & type 4\\
\begin{tikzpicture}
\drawTree
\draw [decorate,decoration={brace,amplitude=5pt},yshift=-3mm]
(1.7,1) -- (0.3,1) node [midway,yshift=-5mm] 
{$S_{j_b}$};
\end{tikzpicture}
&
\begin{tikzpicture}
\drawTree
\draw [decorate,decoration={brace,amplitude=5pt},yshift=-3mm]
(1.2,1) -- (0.3,1) node [midway,yshift=-5mm] 
{$S_{j_b}$};
\end{tikzpicture}
&
\begin{tikzpicture}
\drawTree
\draw [decorate,decoration={brace,amplitude=5pt},yshift=-3mm]
(0.7,1) -- (-0.7,1) node [midway,yshift=-5mm] 
{$S_{j_a}$};
\draw [decorate,decoration={brace,amplitude=5pt},yshift=-3mm]
(1.2,1) -- (0.8,1) node [midway,yshift=-5mm] 
{$S_{j_b}$};
\end{tikzpicture}
&
\begin{tikzpicture}
\drawTree
\draw [decorate,decoration={brace,amplitude=5pt},yshift=-3mm]
(0.2,1) -- (-0.7,1) node [midway,yshift=-5mm] 
{$S_{j_a}$};
\draw [decorate,decoration={brace,amplitude=5pt},yshift=-3mm]
(1.2,1) -- (0.8,1) node [midway,yshift=-5mm] 
{$S_{j_b}$};
\draw [decorate,decoration={brace,amplitude=5pt},yshift=-3mm]
(2.7,1) -- (1.8,1) node [midway,yshift=-5mm] 
{$S_{j_c}$};
\end{tikzpicture}
\\
charge to $\calT_{j_b}$ & charge to $t_x$ & charge to $j_b$ & charge to $\tld\calT$\\
\end{tabu}

\caption{A preferred child alternation in $\calT$: $b$'s preferred child changes from the left side (due to an access to $x$) to the right side (due to an access to $y$). There are four qualitatively different ways in which the 
alternation can happen depending on which ranges $S_1, \ldots, S_l$ the keys $a,b,c,x,y$ belong.}\label{fig:classification}
\end{figure}

\subsubsection{Proving the classification}
\begin{proof}[Proof of \Cref{clm:classification}]
Let us prove that every alternation is of one of these four types.
\begin{itemize}
\item First, suppose that $j_x = j_b = j_y$. Then by construction of $\calT_{j_b}$, $x$ and $y$ are still descendents of $b$ in $\calT_{j_b}$, and since $\calT_{j_b}$ is a binary search tree, $x$ must be in $b$'s left subtree and $y$ must be in $b$'s right subtree. So this is type 1.
\item Now suppose that exactly one of $j_x = j_b$ and $j_b = j_y$ holds. By symmetry, suppose that it is the former, and thus $j_x = j_b < j_y$. Then we trivially have $b \in S_{j_x}$. And on the other hand, consider any ancestor $b'$ of $x$ which is higher than $b$ and satisfies $x<b'$. Then $b'$ would have to satisfy $y < b'$ as well, and in particular $j_y \le j_{b'}$, so it could not lie in $S_{j_x}$. Therefore $b$ is the highest ancestor of $x$ which lies in $S_{j_x}$ and satisfies $x<b$. So this is type 2.
\item From now on we can assume that $j_x < j_b < j_y$, which means in particular that $j_a < j_b < j_c$, that $S_{j_b}$ is contained entirely in $b$'s subtree in $\calT$, and therefore $b$ is the highest member of $S_{j_b}$ in $\calT$. Now, the lowest common ancestor of $S_{j_a}$ (resp. $S_{j_c}$) must be an ancestor of $a$ (resp. $c$) and therefore an ancestor of $b$, so by the properties of $\tld\calT$, $j_a$ (resp. $j_c$) is an ancestor of $j_b$ in $\tld\calT$. Furthermore, any ancestor of $j_b$ in $\tld\calT$ must be of the form $j_{z}$ for some ancestor $z$ of $b$ in $\calT$, so since $a$ (resp. $c$) is the closest (in key value) ancestors of $b$ on its left (resp. right) side in $\calT$, $j_a$ (resp. $j_c$) must be the closest ancestor of $j_b$ on its left (resp. right) side in $\tld\calT$.
    \begin{itemize}
	\item Suppose that at least one of $j_a = j_x$ or $j_y = j_c$ holds. By symmetry suppose that it is the former. Then just by virtue of the fact that $j_x = j_a$, $j_x$ is the closest ancestor of $j_b$ on its left side in $\tld\calT$. So this is type 3.
	\item Otherwise, we have $j_a < j_x < j_b < j_y < j_c$. This implies that $S_{j_x}$ lies entirely within $b$'s left subtree, and $S_{j_y}$ lies entirely within $b$'s right subtree, thus $j_x$ and $j_y$ are descendents of $j_b$ in $\tld\calT$. So this is type 4.\qedhere
    \end{itemize}
\end{itemize}
\end{proof}

\subsubsection{Using the classification to prove \Cref{thm:subadditivity-alt}}
\begin{proof}[Proof of \Cref{thm:subadditivity-alt}]
Let $\calT$ be any binary search tree over $S_1 \union \cdots \union S_l$, and let $\calT_j$ and $\tld\calT$ be the corresponding trees defined in \Cref{def:Tj} and \Cref{def:tldT}. We will show that 
$$
\Alt_\calT(X) \le \Alt_{\tld\calT}\P{\tld{X}} + \sum_j \Alt_{\calT_j}\P{X^{(j)}} + O(m).
$$
Let us use \Cref{clm:classification}: we charge type 1 alternations to $\Alt_{\calT_j}\P{X^{(j)}}$, type 4 alternations to $\Alt_{\tld\calT}\P{\tld{X}}$, and we show below that there are only $O(m)$ alternations of types 2 and 3.

For type 2, this is because we can charge it uniquely to the access made to $x$ or $y$ (formally, we charge it to $t_x$ or $t_y$).
\begin{itemize}
\item Let us take the first subcase: $b$ is the highest ancestor of $x$ such that $b \in S_{j_x}$ and $x<b$. $x$ can only have one highest ancestor with a given property, so it has only one highest ``ancestor $b$ such that $b \in S_{j_x}$ and $x<b$''. So this can apply to at most one of the alternations that occurred when accessing $x$, and thus we can charge it to $t_x$.
\item Let us take the second subcase: $b$ is the highest ancestor of $y$ such that $b \in S_{j_y}$ and $b < y$. Again, $y$ can only have one highest ``ancestor $b$ such that $b \in S_{j_y}$ and $b < y$''. The access to $x$ is the first time that the preferred child switches back from $b$'s left child to $b$'s right child after accessing $y$. So this event is unique from the perspective of this particular access to $y$, and thus we can charge it to $t_y$.
\end{itemize}

Finally, we will bound the total number of occurrences of type 3 alternations by charging them to $j_b$, not uniquely but in a $\f{l}{m}$-to-$1$ manner. Let us take the case where $j_x$ is the closest ancestor of $j_b$ in $\tld{\calT}$ such that $j_x<j_b$ (the other case is analogous). Clearly, $j_b$ determines $j_x$ uniquely. And since $b$'s subtree contains $S_{j_b}$ in its entirety, $j_b$ determines $b$ uniquely too. So once you know $j_b$, the only uncertainty left about this alternation is \emph{which} access within $X^{(j_x)}$ caused it. So the total number of alternations of this type is bounded by
$$
\ub{l}_\text{which $j_b$?}\ \ub{\max_{j_x} \ABS{X^{(j_x)}}}_\text{which access within $X^{(j_x)}$?} = l \f{m}{l} = m,
$$
where the first equality uses our assumption that $|X^{(1)}| = \cdots = |X^{(l)}|$.

Overall, we have shown that
$$
\Alt_\calT(X) \le \ub{\Alt_{\tld\calT}\P{\tld{X}}}_\text{type 4} + \sum_j \ub{\Alt_{\calT_j}\P{X^{(j)}}}_\text{type 1} + \ub{O(m)}_\text{type 2 (charge to $t_x$ or $t_y$)} + \ub{O(m)}_\text{type 3 (charge to $j_b$)},
$$
so we can now take the maximum over $\calT$ to conclude
$$
\begin{aligned}
    \Alt(X)
    &\ce \max_{\calT} \Alt_{\calT}(X)\\
    &\le \max_{\calT} \p*{\Alt_{\tld\calT}\P{\tld{X}} + \sum_j \Alt_{\calT_j}\P{X^{(j)}} + O(m)}\\
    &\le \max_{\tld\calT} \Alt_{\tld\calT}\P{\tld{X}} + \sum_j \max_{\calT_j}\Alt_{\calT_j}\P{X^{(j)}} + O(m)\\
    &= \Alt\P{\tld{X}} + \sum_j \Alt\P{X^{(j)}} + O(m).\qedhere
\end{aligned}
$$
\end{proof}

\section{Boosting the separation between Wilber's bounds}
\label{sec:boosting}
We prove \Cref{thm:main_boosting} in this section. We now use the composition properties of $\Alt$ and $\Funnel$ we proved in \Cref{sec:composition} to show that Tango tree makes an optimal trade-off between fixed costs and variable costs that depend on the alternation bound.

\subsection{What boosting can we get?}
Lecomte and Weinstein \cite{lecomte2019settling} show an $\Omega(\log \log n)$ separation between $\Alt$ and $\Funnel$.

\begin{theorem}[Theorem~2 in~\cite{lecomte2019settling}] \label{thm:separation n}
For any $n$ of the form $2^{2^k}$, there is a sequence $Y_n \in [n]^m$ where $m \le \poly(n)$, each element appears $O(m/n)$ times, and
$$
\al{
\ol\Alt(Y_n) &\le O(1) \\
\ol\Funnel(Y_n) &\ge \Omega(\log \log n).
}
$$
\end{theorem}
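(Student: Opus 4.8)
The plan is to produce $Y_n$ by a recursion of depth $k=\log\log n$ and then bound its two Wilber quantities by a \emph{direct} analysis of its geometric view. Starting from a trivial base sequence on $2$ keys, at level $i$ we take the level-$(i-1)$ sequence, lay $2^{2^{i-1}}$ disjoint copies of it on consecutive blocks of keys, and interleave those copies in time according to a ``sweep'' template $\tld X_i$ over $[2^{2^{i-1}}]$, i.e.\ $B_i\ce\tld X_i\big(B_{i-1}^{(1)},\dots,B_{i-1}^{(2^{2^{i-1}})}\big)$ in the sense of \Cref{def:composition}. Since $2^{2^{i}}=\big(2^{2^{i-1}}\big)^{2}$, after $k$ levels we land on exactly $n=2^{2^k}$ keys; taking each $\tld X_i$ to be a bounded number of monotone sweeps over its blocks keeps the total length $m\le\poly(n)$ and every key accessed $O(m/n)$ times. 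The delicate point is that the $\tld X_i$ must be chosen to meet two goals at once: (i) when fully expanded, $Y_n$ decomposes into monotone runs each of which visits a \emph{contiguous} interval of keys, and the jumps between consecutive runs are confined hierarchically (a jump inside a level-$i$ block never leaves that block); and (ii) the inter-block sweeps are ``out of phase'' across levels, so that the funnel of a typical point acquires a fresh side-interleaving at each of the $k$ scales.

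\paragraph*{Lower bound on $\Funnel$.}
I would obtain $\ol\Funnel(Y_n)=\Omega(k)$ by iterating the superadditivity of $\Funnel$ under composition (\Cref{thm:superadditivity-funnel}): unrolling the recursion gives $\Funnel(Y_n)\ge m\cdot\sum_{i\le k}\ol\Funnel(\tld X_i)-O(k\,m)$, where the $O(k\,m)$ collects the per-level additive losses. It then suffices to check that each sweep template $\tld X_i$ that lives on enough keys has $\ol\Funnel(\tld X_i)$ bounded below by a sufficiently large constant; this follows from goal (ii): by the monotonicity and superadditivity-under-concatenation of $\mixValue$ (\Cref{fact:props-mixvalue}), an ``out of phase'' earlier sweep places, for a constant fraction of points $p$, one extra staircase point in each of $\FL$ and $\FR$ on the opposite time-side of the points from the next sweep, adding a side switch to $\mixValue(\FL.y,\FR.y)$. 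Summing over $i$ yields $\ol\Funnel(Y_n)\ge\Omega(k)=\Omega(\log\log n)$. I expect this direction to be routine once the construction is pinned down.

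\paragraph*{Upper bound on $\Alt$ (the main obstacle).}
Here we must bound $\Alt_\calT(Y_n)$ for \emph{every} reference tree $\calT$. The subadditivity of $\Alt$ under composition (\Cref{thm:subadditivity-alt}) does \emph{not} help: unrolled over $k$ levels its additive $O(m)$ loss accumulates to $O(m\log\log n)$, giving only $\ol\Alt(Y_n)=O(\log\log n)$, whereas we need $O(1)$. Instead, fix $\calT$ and recall $\Alt_\calT(Y_n)=\sum_{v}\mixValue(\PL.y,\PR.y)$ over internal nodes $v$ (\Cref{def:ib}), where $\PL,\PR$ are the accesses inside $v$'s key-interval $I_v$ below resp.\ above the key $v.x$. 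Classify each counted side switch as \emph{internal} (its two consecutive accesses lie in one monotone run of $Y_n$) or a \emph{boundary} switch (they lie in different runs). An internal switch at $v$ forces that run to pass through $v.x$, and since each run is monotone and --- by goal (i) --- visits a contiguous interval, a run of length $\ell$ causes an internal switch at a given node at most once and only at nodes $v$ with $v.x$ in its interior span; hence it is responsible for at most $\ell$ internal switches across \emph{all} of $\calT$, and summing over runs gives at most $\sum_{\text{runs}}\ell=m$ internal switches, uniformly in $\calT$. The boundary switches are the crux: each is attributable to a pair of runs meeting across a jump, and one must use the hierarchical confinement of jumps from goal (i) to charge these, again uniformly over an adversarial $\calT$ (one deliberately misaligned with the block hierarchy of $Y_n$), to an $O(m)$ budget. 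Making this boundary-switch accounting go through --- while keeping goal (ii) alive for the funnel lower bound --- is the main technical obstacle; once it is done, combining the two upper bounds with the lower bound and choosing $m$ so that $\ol\Alt(Y_n)=O(1)$ finishes the proof.
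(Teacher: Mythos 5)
First, a framing point: the paper does not prove this statement at all. \Cref{thm:separation n} is imported verbatim as Theorem~2 of \cite{lecomte2019settling} and used as a black box (it is the base case of the amplification in \Cref{thm:main_boosting}), so your proposal has to be measured against the construction and analysis in that external paper, not against anything in this one.

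As a reconstruction of that result, your proposal has a genuine gap, and it sits exactly where the theorem is hard. The direction $\ol\Funnel(Y_n)\ge\Omega(\log\log n)$ is the more forgiving one, since a lower bound only needs one witness interleaving per scale; the direction $\ol\Alt(Y_n)\le O(1)$ requires defeating \emph{every} reference tree $\calT$, including trees deliberately misaligned with the block hierarchy of your construction. You correctly observe that iterating \Cref{thm:subadditivity-alt} cannot work (the additive $O(m)$ losses stack to $O(m\log\log n)$), and you correctly isolate the ``boundary switches'' between monotone runs as the crux --- but you then leave that accounting as ``the main technical obstacle.'' That obstacle \emph{is} the theorem: without a concrete charging scheme for boundary switches that is uniform in $\calT$, and without an explicit choice of templates $\tld{X}_i$ that simultaneously satisfies your goals (i) and (ii) --- which pull in opposite directions, since (i) wants the sweeps tame and hierarchically confined while (ii) wants them interleaved enough to feed the funnel at every scale --- the proof does not close. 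A secondary soft spot: a single monotone sweep has amortized funnel value essentially $0$, so the claim that ``a bounded number of out-of-phase sweeps'' yields $\ol\Funnel(\tld{X}_i)\ge c$ for a usable constant is asserted rather than shown. For the record, \cite{lecomte2019settling} does not obtain the funnel lower bound via superadditivity (that tool is a contribution of the present paper); it rests on their rotation-invariance theorem for $\Funnel$ together with a bespoke analysis of a specific recursive sequence, and their $\Alt(Y_n)=O(m)$ bound is a delicate direct argument against arbitrary $\calT$. Your plan is a sensible outline of what such a construction must accomplish, but it is not yet a proof.
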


We can use the tight composition results from \Cref{sec:composition} to show the following boosted separation. We emphasize that the approximate subadditivity of the Alternation bound~\cite{ChalermsookCS20} is insufficient to boost the separation.  

\begin{theorem}[Hardness Amplification, Restatement of \Cref{thm:main_boosting}]\label{thm:boosted-separation}
There is a constant $K>0$ such that for any $n$ of the form $2^{2^k}$ and any power-of-two $R \le \f{\log n}{K}$, there is a sequence $Y_n^{\circ R} \in [n]^{m'}$ with $m' \le \poly(n)$ such that
$$
\al{
\ol\Alt\P{Y_n^{\circ R}} &\le O(R)\\
\ol\Funnel\P{Y_n^{\circ R}} &\ge \Omega\P{R\log \P{\f{\log n}{R}}}.
}
$$
\end{theorem}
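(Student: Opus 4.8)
The plan is to build $Y_n^{\circ R}$ by a recursive self-composition of the base sequence $Y_n$ from \Cref{thm:separation n}, using the composition operation of \Cref{def:composition} and the direct-sum bounds of \Cref{thm:direct sums}. Concretely, set $Y_n^{\circ 1} \ce Y_n$, and for a power-of-two $R$, obtain $Y_n^{\circ R}$ by composing $R'$ copies of a smaller boosted sequence according to a template that is itself (a suitably rescaled copy of) $Y_{n'}$, where $n = (n')^{R'}$ or similar — i.e. $Y_n^{\circ R} = \tld Y(Z, \ldots, Z)$ with $\tld Y$ and $Z$ both lower-boosted instances. The branching factor and depth of the recursion should be chosen so that after $\Theta(\log R)$ levels we reach key-count $n$; the simplest clean choice is a binary recursion of depth $\log R$, i.e. $Y_n^{\circ R} = \tld Y(Y_{\sqrt n}^{\circ R/2}, \ldots)$ where the template $\tld Y$ is a copy of $Y_{\sqrt n}^{\circ R/2}$ as well, so that both the template and the inner sequences are over $\sqrt n$ keys and the total key count multiplies correctly. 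One must check the divisibility/power-of-two hypotheses of \Cref{thm:separation n} and \Cref{thm:subadditivity-alt} (the equal-length assumption $|X^{(1)}| = \cdots = |X^{(l)}|$) are preserved, which is why $n$ is restricted to the form $2^{2^r}$ and $R$ to a power of two; the ``each element appears $O(m/n)$ times'' clause of \Cref{thm:separation n} is what keeps the lengths balanced enough to invoke the composition cleanly (possibly after padding).

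Next I would run the two recursions in parallel. For the Alternation upper bound, the subadditivity part of \Cref{thm:direct sums} gives, writing $A(R)$ for an upper bound on $\Alt(Y_n^{\circ R})$ at the appropriate scale and $m_R$ for its length,
\[
A(R) \le A(R/2)_{\text{template}} + (R/2)\cdot A(R/2)_{\text{inner}} + O(m_R),
\]
but since every term is an amortized $\ol\Alt$ and the length multiplies by the branching factor, dividing through by $m_R$ collapses this to a recurrence $\ol a(R) \le \ol a(R/2) + \ol a(R/2) + O(1) \le 2\,\ol a(R/2) + O(1)$ — wait, that would give $\ol a(R) = O(R)$ only if the template contributes $\ol a(R/2)$ and the pooled inner copies contribute another $\ol a(R/2)$; unrolling the depth-$\log R$ recursion with base value $\ol\Alt(Y_n) = O(1)$ then yields $\ol a(R) = O(R)$, which is exactly the claimed bound $\ol\Alt(Y_n^{\circ R}) \le O(R)$. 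For the Funnel lower bound, superadditivity of \Cref{thm:direct sums} gives the mirror-image recurrence $\ol w(R) \ge \ol w(R/2) + \ol w(R/2) - O(1) = 2\,\ol w(R/2) - O(1)$ with base $\ol\Funnel(Y_n) \ge \Omega(\log\log n)$ at key-count $n$ — but crucially, at recursion level $i$ the relevant key count is $n^{1/2^i}$, so the base-case Funnel value there is $\Omega(\log\log n^{1/2^i}) = \Omega(\log(\tfrac{\log n}{2^i}))$, and summing the $\log R$ scaled copies gives $\ol\Funnel(Y_n^{\circ R}) \ge \Omega(R \log(\tfrac{\log n}{R}))$ once one checks the geometric sum is dominated by its top term (valid as long as $\log n / R$ is bounded below by a constant, hence the hypothesis $R \le \log n / K$).

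The main obstacle, I expect, is bookkeeping the additive $O(m)$ and $O(|X|)$ error terms through $\log R$ levels of recursion so that they do not swamp the Funnel lower bound: each application of \Cref{thm:direct sums} loses an additive $O(\text{length})$, and after normalization this is an additive $O(1)$ per level, i.e. $O(\log R)$ total, which must be absorbed into $\Omega(R\log(\log n/R))$ — fine since $\log R \ll R$, but the constants have to line up, and this is precisely why one needs the \emph{exact} direct-sum theorem rather than the approximate subadditivity $\Alt(X) \le 4\Alt(\tld X) + 8\sum_j\Alt(X^{(j)}) + O(|X|)$ of \cite{ChalermsookCS20}, whose multiplicative blow-up of $4$ per level would turn $O(1)$ into $4^{\log R} = \poly(R)$ and destroy the bound. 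A secondary technical point is verifying $m' \le \poly(n)$: the length multiplies by the branching factor at each of $\log R \le \log\log n$ levels starting from $m = \poly(n)$, so $m' \le \poly(n)^{O(\log\log n)}$ a priori — one must either argue more carefully (the branching factors along a root-to-leaf path multiply to $R \le \log n$, and the $\poly(n)$ base length only enters once at the leaves, each leaf over $n^{1/R}$ keys with its own $\poly(n^{1/R})$ length) to get a genuine $\poly(n)$, or absorb the slack by noting $\poly(n)^{\polylog(n)} = \quasipoly(n)$ is still acceptable if the theorem statement is read loosely — but since the statement demands $\poly(n)$, the careful accounting via the leaf-length/path-product decomposition is the honest route.
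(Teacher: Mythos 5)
Your proposal follows essentially the same route as the paper: the recursive construction $Y_n^{\circ 2R} = \bigl(Y_{\sqrt n}^{\circ R}\bigr)^{\otimes \sqrt n}\bigl(Y_{\sqrt n}^{\circ R},\ldots,Y_{\sqrt n}^{\circ R}\bigr)$ with the template itself a (repeated) copy of the inner sequence, the amortized recurrences $\ol a(2R)\le 2\,\ol a(R)+O(1)$ and $\ol w(2R)\ge 2\,\ol w(R)-O(1)$ unrolled over $\log R$ levels from base key-count $n^{1/R}$, the use of $R\le \log n/K$ to absorb the additive losses, and the leaf-length/path-product accounting $n^{1-1/R}\,\poly(n^{1/R})\le\poly(n)$. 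Your observations about padding to meet the equal-length hypothesis and about why the approximate subadditivity of \cite{ChalermsookCS20} would fail are exactly the points the paper makes; the argument is correct.
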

\begin{proof}
Let $Y_n$ be the sequence stated in \Cref{thm:separation n}. First, pad $Y_n$ so that each key appears \emph{exactly} $\abs{Y_n}/n$ times, by adding each key one by one the appropriate number of times, in ascending order. It is easy to see that this maintains the bounds
$$
\al{
\ol\Alt(Y_n) &\le O(1)\\
\ol\Funnel(Y_n) &\ge \Omega(\log \log n).
}
$$
Now, let $C_O, C_\Omega>0$ be constants such that
$$
\al{
\ol\Alt(Y_n) &\le C_O\\
\ol\Funnel(Y_n) &\ge C_\Omega\log \log n
}
$$
(we will allow ourselves to make $C_O$ even larger later on).

Let $Y_n^{\circ 1} \ce Y_n$, and for all power-of-two $R \ge 1$, let
$$
Y_n^{\circ 2R} \ce \p*{Y_{\sqrt{n}}^{\circ R}}^{\otimes \sqrt{n}}\P{Y_{\sqrt{n}}^{\circ R}, \ldots, Y_{\sqrt{n}}^{\circ R}},
$$
where \quotes{$X^{\otimes \sqrt{n}}$} means \quotes{$X$ repeated $\sqrt{n}$ times}, and with an abuse of notation we assume that the $\sqrt{n}$ sequences $Y_{\sqrt{n}}^{\circ R}, \ldots, Y_{\sqrt{n}}^{\circ R}$ that are being composed each contains a distinct range of keys. 
We can check that
$$
\abs*{Y_n^{\circ R}} = n^{1-1/R} |Y_{n^{1/R}}| \le n^{1-1/R}\ \poly\P{n^{1/R}} \le \poly(n)
$$
as desired. 
We will show by induction that
$$
\al{
\ol\Alt\P{Y_n^{\circ R}} &\le C_O(2R-1)\\
\ol\Funnel\P{Y_n^{\circ R}} &\ge C_\Omega\f{R+1}2 \log\P{ \f{\log n}R}.
}
$$
\paragraph*{Base case: $R=1$.} We verify that indeed
$$
\al{
\ol\Alt(Y_n) &\le C_O\\
&= C_O(2\cdot 1 - 1)\\
&= C_O(2R-1)
}
$$
and
$$
\al{
\ol\Funnel(Y_n)
&\ge C_\Omega\log \log n\\
&= C_\Omega \f{1+1}{2} \log\P{ \f{\log n}1}\\
&= C_\Omega \f{R+1}{2} \log\P{ \f{\log n}R}.
}
$$
\paragraph*{Inductive case: $R \to 2R$.} Suppose this is true for some $R \ge 1$, for all $n$. Then for $\Alt$, by \Cref{thm:subadditivity-alt} we have
$$
\al{
\ol\Alt\P{Y_n^{\circ 2R}} &\le \ol\Alt\P{Y_{\sqrt{n}}^{\circ R}} + \f{\sqrt{n} \cdot \ol\Alt\P{Y_{\sqrt{n}}^{\circ R}}}{\sqrt{n}} + O(1)\\
&\leq  2C_O\P{2R-1} + O(1)\\
&= C_O\p*{4R-1} - C_O + O(1)\\
&\le C_O\p*{4R-1},
}
$$
where the last step holds as long as $C_O$ is large enough.

For $\Funnel$, by \Cref{thm:superadditivity-funnel} we have
$$
\al{
\ol\Funnel\P{Y_{\sqrt{n}}^{\circ R}} &\ge \ol\Funnel\P{Y_{\sqrt{n}}^{\circ R}} + \f{\sqrt{n} \cdot \ol\Funnel\P{Y_{\sqrt{n}}^{\circ R}}}{\sqrt{n}} - O(1)\\
&\geq  C_\Omega(R+1) \log\P{\f{\log \sqrt{n}}R} - O(1)\\
&= C_\Omega\f{2R+1}{2} \log\P{\f{\log n}{2R}} + \f{C_\Omega}{2}\log\P{\f{\log n}{2R}} - O(1)\\
&\ge C_\Omega\f{2R+1}{2} \log\P{\f{\log n}{2R}} + \f{C_\Omega}{2}\log\P{\f{K}{2}} - O(1)\\
&\ge C_\Omega\f{2R+1}{2} \log\P{\f{\log n}{2R}},
}
$$
where the penultimate step holds because $R \le \frac{\log n}{K}$, and the last step holds as long as $K$ is large enough.
\end{proof}

\subsection{Optimality of Tango Trees} \label{sec:opt tango}
We are  ready to prove \Cref{thm:lower bound alt charging}.  
\begin{theorem}[Restatement of \Cref{thm:lower bound alt charging}]
    Let $\alpha,\beta: \N \to \R_{\ge 1}$ be any functions and let $A$ be some BST algorithm. Denote the amortized cost of $A$ on an access sequence $X$ as $\ol A(X)$. Suppose that for all access sequences $X$ over $n$ keys, we have
    $$
    \ol A(X) \le \alpha(n)\ol\Alt(X) + \beta(n).
    $$
    Then $\alpha(n) \ge \Omega\P{\log \P{\f{\log n}{\beta(n)}}}$ for infinitely many values of $n$.
\end{theorem}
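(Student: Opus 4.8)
The strategy is to feed the amplified sequences of \Cref{thm:boosted-separation} into the hypothesis and read off a lower bound on $\alpha(n)$. The only extra ingredient is Wilber's classical fact that the Funnel bound lower-bounds the cost of every BST, i.e. $\Funnel(X) \le O(\OPT(X)) + O(|X|)$ for every access sequence $X$. Since the algorithm $A$ is a legal BST, its cost is at least $\OPT(X)$; dividing by $|X|$ and using that $\ol A(X) \ge 1$ trivially (so the additive $O(|X|)$ is harmless) gives $\ol\Funnel(X) \le O\P{\ol A(X)}$. Combining this with the hypothesis $\ol A(X) \le \alpha(n)\ol\Alt(X) + \beta(n)$ and using $\alpha(n),\beta(n) \ge 1$, we obtain
\[
\ol\Funnel(X) \;\le\; O\P{\alpha(n)\,\ol\Alt(X) + \beta(n)} \qquad\text{for every $X$ over $n$ keys.}
\]

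Next I would specialize to the hard instances. Fix any $n = 2^{2^k}$ and any power-of-two $R$ with $R \le \f{\log n}{K}$, and take $X \ce Y_n^{\circ R}$ from \Cref{thm:boosted-separation}: this is a sequence over $n$ keys with $m' \le \poly(n)$ accesses satisfying $\ol\Alt(X) \le O(R)$ while $\ol\Funnel(X) \ge \Omega\P{R\log\P{\f{\log n}{R}}}$. Plugging these into the inequality above and dividing through by $R \ge 1$ yields
\[
\Omega\P{\log\P{\f{\log n}{R}}} \;\le\; O\P{\alpha(n)} + O\P{\f{\beta(n)}{R}}.
\]

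The final step is to choose $R$ so as to kill the term $\beta(n)/R$ while keeping $\log\P{\f{\log n}{R}}$ essentially as large as $\log\P{\f{\log n}{\beta(n)}}$. Suppose first that $\beta(n) \le \f{\log n}{K}$; then I would take $R$ to be the largest power of two with $R \le \beta(n)$, which is a legal choice because $1 \le \beta(n) \le \f{\log n}{K}$, and which satisfies $\f{\beta(n)}{2} \le R \le \beta(n)$. Then $\f{\beta(n)}{R} \le 2$ and $\log\P{\f{\log n}{R}} \ge \log\P{\f{\log n}{\beta(n)}}$, so the displayed inequality becomes $\Omega\P{\log\P{\f{\log n}{\beta(n)}}} \le O\P{\alpha(n)}$, i.e. $\alpha(n) \ge \Omega\P{\log\P{\f{\log n}{\beta(n)}}}$ (again $\alpha(n)\ge 1$ absorbs the additive constant). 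If instead $\beta(n) > \f{\log n}{K}$, then $\log\P{\f{\log n}{\beta(n)}} < \log K = O(1)$, so the claimed inequality holds vacuously from $\alpha(n) \ge 1$ after choosing the hidden constant to be at most $1/\log K$ (we may assume $K \ge 2$, enlarging it if necessary). Since all hidden constants are absolute and the argument runs for every $n$ of the form $2^{2^k}$, this establishes $\alpha(n) \ge \Omega\P{\log\P{\f{\log n}{\beta(n)}}}$ for infinitely many $n$.

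I do not anticipate a genuine obstacle here, since \Cref{thm:boosted-separation} already carries all the combinatorial content; the only points needing care are keeping the chosen $R$ a power of two inside the admissible range $[1,\f{\log n}{K}]$, and the degenerate regime $\beta(n) \gtrsim \log n$ where the statement is vacuous anyway. It is worth emphasizing that the argument genuinely needs the \emph{Funnel} bound (not the Alternation bound) on the right-hand side of the lower bound: the whole point is that $Y_n^{\circ R}$ is easy for $\Alt$ yet hard for $\OPT$ as witnessed by $\Funnel$, which is precisely what superadditivity of $\Funnel$ together with subadditivity of $\Alt$ from \Cref{thm:direct sums} deliver through \Cref{thm:boosted-separation}.
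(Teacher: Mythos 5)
Your proposal is correct and follows essentially the same route as the paper's proof: instantiate the hypothesis on $Y_n^{\circ R}$ with $R$ the largest power of two at most $\beta(n)$, use that the Funnel bound lower-bounds $\OPT$ (hence the algorithm's cost), and cancel the factor $R$. The only difference is that you explicitly handle the degenerate regime $\beta(n) > \log n/K$ (where the admissible range of $R$ would be exceeded), a minor point the paper glosses over.
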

\begin{proof}
In fact we will show that the result holds under the weaker assumption that the theorem holds for the \emph{optimal} amortized cost:
$$
\ol \OPT(X) \le \alpha(n)\ol\Alt(X) + \beta(n).
$$
The above inequality must in particular hold for the access sequence $Y_n^{\circ R}$ from \Cref{thm:main_boosting}, where we let $R$ be the largest power of two such that $R\le\beta(n)$. This gives us
$$
\left\{
\al{
\ol \OPT\P{Y_n^{\circ R}} &\le \alpha(n)\ol\Alt(X) + \beta(n) \le O\P{R(\alpha(n) + 1)}\\
\ol \OPT\P{Y_n^{\circ R}} &\ge \Omega\P{\ol\Funnel\P{Y_n^{\circ R}}} \ge \Omega\P{R\log \P{\f{\log n}{R}}}.
}
\right.
$$
Combining these inequalities, we obtain
$$
R(\alpha(n) + 1) \ge \Omega\P{R\log \P{\f{\log n}{R}}} \implies \alpha(n) \ge \Omega\P{\log \P{\f{\log n}{R}}} \ge \Omega\P{\log \P{\f{\log n}{\beta(n)}}}
$$
(where the implication uses the fact that $\alpha(n)\ge 1$).
\end{proof}


\paragraph*{Optimal Trade-off } The matching trade-offs for $\alpha(n)$ and $\beta(n)$ follows from Tango trees~\cite{DHIP07} as discussed in the introduction. We provide an alternate proof by designing an algorithm parameterized by $k$ in \Cref{lem:bootstrapping tango} for completeness. That is, we describe a family of simple algorithms parameterized by $k \leq \log n$ that  gives a matching upper bound of
$$
O\P{ \P{\ALT(X) + m \cdot \frac{\log n}{k}} \cdot (\log k + 1)}.
$$

In this algorithm, the fixed cost is $\beta(n) = \Theta\P{\f{\log n \log k}{k}}$ and multiplicative factor $\alpha(n) = \Theta(\log k)$. We just proved that any algorithm with fixed cost $\beta(n)$ must have
$$
\alpha(n) \ge \Omega\P{\log \P{\f{\log n}{\beta(n)}}} = \Omega\P{\log \P{\f{k}{\log k}}} = \Omega(\log k),
$$
so our trade-off is optimal for any sufficiently large $k \le \log n$.

\emph{Note:} The trade-off above suggests that there is some leeway in the other direction, i.e. we could have an algorithm with the same multiplicative factor of $O(\log k)$ but a reduced fixed cost of $\f{\log n}{(\log k)^C}$ for any $C>1$ we choose. And this is indeed the case: we simply need to change variables
$$
k' \ce k^{1/C} \implies O\P{(\log k)\ol\Alt(X) + \f{\log n \log k}{k}} \le O\P{\p*{\log k'}\ol\Alt(X) + \f{\log n}{\p*{k'}^{C - o(1)}}}.
$$

\begin{theorem}[Restatement and full version of \Cref{thm:lower bound alt as function of opt}] \label{thm:full version of thm lower bound as f of opt}
Let $\alpha: \N\times\R_{\ge 1} \to \R_{\ge 1}$ be any function and let $A$ be some BST algorithm. Denote the amortized cost of $A$ on an access sequence $X$ as $\ol A(X)$. Suppose that for all access sequences $X$ over $n$ keys, we have
$$
\ol A(X) \le \alpha\P{n,\ol\OPT(X)} \cdot \P{\ol\Alt(X)+1}.
$$
Then $\alpha(n,s) \ge \Omega\P{\log \P{\f{\log n}{s}}}$ for infinitely many values of $n$, and (for each such $n$) for a set of values for $s$ that ranges from $O(1)$ to $\Omega(\log n)$ and are spaced by at most a constant factor.
\end{theorem}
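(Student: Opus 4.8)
The plan is to rerun the proof of \Cref{thm:lower bound alt charging} almost verbatim, the only change being that instead of committing to one value of $R$ dictated by a fixed additive term $\beta(n)$, I would let $R$ range over all powers of two up to $\log n/K$ and read off a separate lower bound on $\alpha(n,\cdot)$ at each resulting value of $\ol\OPT$. As in that proof, the first step is to weaken the hypothesis: since $A$ is a BST algorithm we have $\ol A(X) \ge \ol\OPT(X)$ for every $X$, so the assumed inequality already gives $\ol\OPT(X) \le \alpha\P{n,\ol\OPT(X)}\cdot\P{\ol\Alt(X)+1}$, that is, $\alpha\P{n,\ol\OPT(X)} \ge \ol\OPT(X)/(\ol\Alt(X)+1)$.

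Next I would instantiate this with the amplified sequences of \Cref{thm:main_boosting}. Fix $n = 2^{2^r}$ and a power of two $R \le \log n/K$, and write $s_R \ce \ol\OPT\P{Y_n^{\circ R}}$. Using $\ol\Alt\P{Y_n^{\circ R}} \le O(R)$ together with the fact that the Funnel bound lower-bounds $\OPT$, so that $s_R \ge \Omega\P{\ol\Funnel\P{Y_n^{\circ R}}} \ge \Omega\P{R\log\P{\f{\log n}{R}}}$, the inequality above gives $\alpha(n,s_R) \ge s_R/(\ol\Alt(Y_n^{\circ R})+1) \ge \Omega\P{\log\P{\f{\log n}{R}}}$. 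I would then rewrite the right-hand side in terms of $s_R$ itself: since $s_R \ge \Omega\P{R\log\P{\f{\log n}{R}}} \ge \Omega(R)$ we have $\f{\log n}{s_R} \le O\P{\f{\log n}{R}}$, hence $\log\P{\f{\log n}{s_R}} \le \log\P{\f{\log n}{R}} + O(1)$, which is $O\P{\log\P{\f{\log n}{R}}}$ because $\log\P{\f{\log n}{R}} \ge \Omega(1)$ whenever $R \le \log n/K$ and $K$ is a large enough absolute constant. Combining, $\alpha(n,s_R) \ge \Omega\P{\log\P{\f{\log n}{s_R}}}$, which is precisely the claimed bound at the point $s = s_R$; as $Y_n$ exists for every $n = 2^{2^r}$, this already establishes the short form \Cref{thm:lower bound alt as function of opt}.

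For the full version I additionally need to describe where the values $\{\, s_R : R \text{ a power of two},\ 1 \le R \le \log n/K \,\}$ lie. For this I would also upper-bound $s_R$, by running the Tango-type algorithm of \Cref{lem:bootstrapping tango} with parameter $k \approx \f{\log n}{R}$ on $Y_n^{\circ R}$: since $\Alt\P{Y_n^{\circ R}} \le O(R)\cdot|Y_n^{\circ R}|$, its amortized cost is $O\P{\P{R+\f{\log n}{k}}(\log k + 1)} = O\P{R\log\P{\f{\log n}{R}}}$, so in fact $s_R = \Theta\P{R\log\P{\f{\log n}{R}}}$. From this closed form, $s_1 = \Theta(\log\log n)$, the largest admissible $R$ (which is $\Theta(\log n)$, as $K$ is constant) gives $s_{R_{\max}} = \Theta(\log n)$, and consecutive values obey $s_{2R}/s_R = \Theta\P{\log(\log n/2R)/\log(\log n/R)}$, which lies between two positive absolute constants throughout the range; so the $s_R$'s are spaced by at most a constant factor and cover the range from $\Theta(\log\log n)$ up to $\Theta(\log n)$, which can be extended down to an $O(1)$ endpoint by interleaving $Y_n$ with sufficiently many trivial (repeated-key) accesses, diluting $\ol\OPT$ towards $1$ while keeping $\ol\Alt = O(1)$ so that the same inequality from the first step can be read off there as well.

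I expect the only genuinely new work beyond rerunning the proof of \Cref{thm:lower bound alt charging} to be this bookkeeping: converting the clean estimate $\alpha(n,s_R) \ge \Omega(\log(\log n/R))$ into a self-referential statement about $\alpha$ evaluated exactly at $s_R = \ol\OPT(Y_n^{\circ R})$, which needs the Funnel lower bound from \Cref{thm:main_boosting} for the rewriting step paired with the matching Tango upper bound from \Cref{lem:bootstrapping tango} to locate the $s_R$'s precisely and certify that they form a dense grid. I would not anticipate any conceptual obstruction; the amplification theorem \Cref{thm:main_boosting} does all the heavy lifting.
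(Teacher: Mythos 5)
Your main argument coincides with the paper's proof: weaken the hypothesis to the optimal amortized cost, instantiate it at $Y_n^{\circ R}$ for every power of two $R \le \log n/K$, combine $\ol\OPT(Y_n^{\circ R}) \ge \Omega(\ol\Funnel(Y_n^{\circ R})) \ge \Omega(R\log(\log n/R))$ with $\ol\Alt(Y_n^{\circ R}) \le O(R)$ to get $\alpha(n,s_R) \ge \Omega(\log(\log n/R))$, and then rewrite the bound in terms of $s_R$ using $s_R \ge \Omega(R)$. On the range-and-spacing claim you are in fact more careful than the paper, which dismisses it with ``it is easy to verify'': your use of \Cref{lem:bootstrapping tango} with $k \approx \log n/R$ to pin down $s_R = \Theta(R\log(\log n/R))$, and the resulting check that consecutive $s_R$ differ by at most a constant factor, is a genuine and needed addition.

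The one step that does not work is the final padding trick. Diluting $Y_n$ with trivial repeated accesses can indeed drive $\ol\OPT$ down to $\Theta(1)$ while keeping $\ol\Alt = O(1)$, but at such a sequence the only inequality the method yields is $\alpha(n,s) \ge \ol\OPT/(\ol\Alt+1) \ge \Omega(s)$, which for $s = O(1)$ is $\Omega(1)$ --- far short of the claimed $\Omega(\log(\log n/s)) = \Omega(\log\log n)$. In fact no witness sequence can certify the stated bound at $s = O(1)$ by this route, since $\ol\OPT(X)/(\ol\Alt(X)+1) \le \ol\OPT(X) = s$ always. The honest range delivered by the family $\{Y_n^{\circ R}\}$ is $s \in [\Theta(\log\log n), \Theta(\log n)]$, because $s_1 = \Theta(\log\log n)$; the paper's own one-line assertion that the values reach down to $O(1)$ suffers from the same problem, so your instinct that the low end needs extra work was correct, but the proposed fix cannot establish the theorem's bound there.
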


\begin{proof}
Again, we will show that the result holds under the weaker assumption that the theorem holds for the \emph{optimal} amortized cost:
$$
\ol \OPT(X) \le \alpha\P{n,\ol\OPT(X)} \cdot \P{\ol\Alt(X)+1}.
$$
The above inequality must in particular hold for the access sequence $Y_n^{\circ R}$ from \Cref{thm:main_boosting}, for any power-of-two $R$. This gives us
$$
\left\{
\al{
\ol \OPT\P{Y_n^{\circ R}} &\le \alpha\P{n,\ol\OPT(Y_n^{\circ R})} \cdot \P{\ol\Alt(Y_n^{\circ R})+1} \le O\P{R\alpha\P{n,\ol\OPT(Y_n^{\circ R})}}\\
\ol \OPT\P{Y_n^{\circ R}} &\ge \Omega\P{\ol\Funnel\P{Y_n^{\circ R}}} \ge \Omega\P{R\log \P{\f{\log n}{R}}}.
}
\right.
$$
Combining these inequalities, we obtain
$$
\alpha\P{n,\ol\OPT(Y_n^{\circ R})} \ge \Omega\P{\log \P{\f{\log n}{R}}} \ge \Omega\P{\log \P{\f{\log n}{\ol\ALT(Y_n^{\circ R})}}} \ge \Omega\P{\log \P{\f{\log n}{\ol\OPT(Y_n^{\circ R})}}}.
$$
It is easy to verify that $\ol\OPT(Y_n^{\circ R})$ can range from $O(1)$ to $\Omega(\log n)$ and that the available values spaced by at most a constant factor. 
\end{proof}

\section{Superadditivity of the Funnel Bound} \label{sec:appendix:supperadditivity funnel}

We prove \Cref{thm:superadditivity-funnel} in this section.

\begin{proof}[Proof of~\Cref{thm:superadditivity-funnel}]
The proof is mostly by picture (see \Cref{fig:internal-external-funnel}).
Take some point $p=(x,t)$ in $G_X$, the geometric view of $X$. Its funnel will include:
\begin{itemize}
\item Its \emph{external funnel}: the points corresponding to its funnel from $\tld{X}$.\footnote{Or, more precisely, corresponding to the funnel of $(j_x, t)$ in the geometric view $G_{\tld{X}}$.} This is because e.g. if $j \in [l]$ is the most recent access that's close to $j_x$ on its left side, then the corresponding access in the composed sequence will definitely be the most recent access that's close to $x$ on its left side, too.
\item Its \emph{internal funnel}: the accesses corresponding to its funnel points from $X^{(j_x)}$.\footnote{Or, more precisely, corresponding to the funnel of $(x, \sigma(t))$ in the geometric view $G_{X^{(j_x)}}$, where $\sigma$ is defined as in \Cref{def:composition}.} This is because e.g. if $y \in S_{j_x}$ is the most recent access that's close to $x$ on its left side, then the other sequences $X^{(j)}$ for $j \ne j_x$ are not going to interfere with that.
\item And maybe some extra funnel points.
\end{itemize}

\begin{figure}[h]
\centering
\newcommand{\diskColor}{black!40}
\newcommand{\inColor}{blue!50}
\newcommand{\exColor}{red!50}
\newcommand{\shadowColor}{lightgray}
\newcommand{\funnel}[4]{
    \fill[#4] (#1,#2) circle (3mm);
    \node at (#1,#2) (C) {};
    \node[above=0mm of C] {#3};
}
\newcommand{\funnelL}[3]{
    \fill[\shadowColor] (#1,#2) rectangle (\leftBorder,\bottom);
    \funnel{#1}{#2}{$\ttL$}{#3}
}
\newcommand{\funnelR}[3]{
    \fill[\shadowColor] (#1,#2) rectangle (\rightBorder,\bottom);
    \funnel{#1}{#2}{$\ttR$}{#3}
}
\newcommand{\access}[2]{\node[cross] (p#2) at (#1,#2) {};}
\newcommand{\separator}[1]{\draw [dashed] (#1,\topBorder) -- (#1,\bottom);}

\begin{tabu}{ccc}
\begin{tikzpicture}[scale=0.5]
\newcommand{\bottom}{-1}
\newcommand{\topBorder}{12.5}
\newcommand{\leftBorder}{-3.5}
\newcommand{\rightBorder}{8.5}
\funnelR{5}{2}{\inColor}
\funnelL{3}{3}{\inColor}
\funnelR{6}{5}{\diskColor}
\funnelL{2}{7}{\exColor}
\funnelL{-1}{8}{\diskColor}
\funnelR{7}{10}{\exColor}
\funnelL{-2}{11}{\exColor}
\fill[\shadowColor] (\leftBorder,\bottom) rectangle (\rightBorder,0);
\separator{-0.5}
\separator{2.5}
\separator{5.5}
\access{4}{0}
\access{0}{1}
\access{5}{2}
\access{3}{3}
\access{1}{4}
\access{6}{5}
\access{8}{6}
\access{2}{7}
\access{-1}{8}
\access{-3}{9}
\access{7}{10}
\access{-2}{11}
\access{4}{12}
\node[above=0mm of p12] {$(x,t)$};
\draw [decorate,decoration={brace,amplitude=5pt},yshift=-3mm]
(-0.7,\bottom) -- (-3.3,\bottom) node [midway,yshift=-5mm] 
{$S_{j_x-2}$};
\draw [decorate,decoration={brace,amplitude=5pt},yshift=-3mm]
(2.3,\bottom) -- (-0.3,\bottom) node [midway,yshift=-5mm] 
{$S_{j_x-1}$};
\draw [decorate,decoration={brace,amplitude=5pt},yshift=-3mm]
(5.3,\bottom) -- (2.7,\bottom) node [midway,yshift=-5mm] 
{$S_{j_x}$};
\draw [decorate,decoration={brace,amplitude=5pt},yshift=-3mm]
(8.3,\bottom) -- (5.7,\bottom) node [midway,yshift=-5mm] 
{$S_{j_x+1}$};
\draw [->, \inColor] (p2) -- (p3);
\draw [->, \diskColor] (p3) -- (p5);
\draw [->, \diskColor] (p5) -- (p7);
\draw [->, \exColor] (p8) -- (p10);
\draw [->, \exColor] (p10) -- (p11);
\end{tikzpicture}

&

\begin{tikzpicture}[scale=0.5]
\newcommand{\bottom}{-1}
\newcommand{\topBorder}{11.5}
\newcommand{\leftBorder}{-0.5}
\newcommand{\rightBorder}{3.5}
\funnelL{1}{7}{\exColor}
\funnelR{3}{10}{\exColor}
\funnelL{0}{11}{\exColor}
\fill[\shadowColor] (\leftBorder,\bottom) rectangle (\rightBorder,3);
\access{2}{0}
\access{1}{1}
\access{2}{2}
\access{2}{3}
\access{1}{4}
\access{3}{5}
\access{3}{6}
\access{1}{7}
\access{0}{8}
\access{0}{9}
\access{3}{10}
\access{0}{11}
\access{2}{12}
\node[above=0mm of p12] {$(j_x,t)$};
\draw [decorate,decoration={brace,amplitude=5pt},yshift=-3mm]
(3.3,\bottom) -- (-0.3,\bottom) node [midway,yshift=-5mm] 
{$[l]$};
\draw [->, \exColor] (p7) -- (p10);
\draw [->, \exColor] (p10) -- (p11);
\end{tikzpicture}

&

\begin{tikzpicture}[scale=0.5]
\newcommand{\bottom}{-1}
\newcommand{\topBorder}{11.5}
\newcommand{\leftBorder}{2.5}
\newcommand{\rightBorder}{5.5}
\funnelR{5}{1}{\inColor}
\funnelL{3}{2}{\inColor}
\fill[\shadowColor] (\leftBorder,\bottom) rectangle (\rightBorder,0);
\access{4}{0}
\access{5}{1}
\access{3}{2}
\access{4}{3}
\node[above=0mm of p3] {$(x,\sigma(t))$};
\draw [decorate,decoration={brace,amplitude=5pt},yshift=-3mm]
(5.3,\bottom) -- (2.7,\bottom) node [midway,yshift=-5mm] 
{$S_{j_x}$};
\draw [->, \inColor] (p1) -- (p2);
\end{tikzpicture}
\\
$G_X$
&
$G_{\tld{X}}$
&
$G_{X^{(j_x)}}$
\\
\end{tabu}

\caption{The funnel of an access in the combined sequence $X$ (left) includes the \emph{external funnel} (in red) of the corresponding access in $\tld{X}$ (center) and the \emph{internal funnel} (in blue) of the corresponding accesss in $X^{(j_x)}$ (right), as well as some extra funnel points corresponding to neither (in gray). Therefore, the contribution of $(x,t)$ to $\Funnel(X)$ (arrows on the left), is lower bounded by the contribution of $(j_x,t)$ to $\Funnel\P{\tld{X}}$ (red arrows) and the contribution of $(x,\sigma(t))$ to the $\Funnel\P{X^{(j_x)}}$ (blue arrows).}\label{fig:internal-external-funnel}
\end{figure}

In addition, all points of the external funnel are later in time than all points of the internal funnel. Suppose by contradiction there is an internal funnel point $(x_I, t_I)$ that comes later than an external funnel point $(x_E, t_E)$, i.e., $t_I > t_E$. Also assume w.l.o.g.~that $j_{x_E} < j_x$. This means in $G_{\widetilde{X}}$ we must have $j_x = j_{x_I}$, then it means the point $(j_{x_I}, t_I)$ is in the rectangle formed by $(j_x, t)$ and $(j_{x_E}, t_E)$ since $t > t_I > t_E$ and $j_x = j_{x_I} > j_{x_E}$, and this contradicts with the definition that $(j_{x_E}, t_E)$ is a funnel point in $G_{\widetilde{X}}$.

So every side switch between consecutive points of the external funnel gives at least one side switch in the combined funnel, and so does every side switch between consecutive points of the internal funnel, and there is no double counting.

So the number of times the funnel points switch sides is at least the number of times the funnel points switch sides in $\tld{X}$ and $X^{(j_x)}$ combined. Thus
$$
\Funnel(X) \ge \Funnel\P{\tld{X}} + \sum_j \Funnel\P{X^{(j)}},
$$
and we don't even suffer an $O(m)$ loss.\footnote{A $O(m)$ loss could have arisen depending on the precise definition of the funnel bound, e.g. if as in \cite{lecomte2019settling} we had counted the number of \emph{consecutive streaks} of one side instead of counting the number of times the funnel \emph{flips} sides.}
\end{proof}



\bibliographystyle{plainurl}

\appendix

\section{Proof of Lemma~\ref{lem:bootstrapping tango}} \label{sec:appendix:boosting tango}
In this section, we prove  \Cref{lem:bootstrapping tango}. The intuition is we essentially use Tango-trees but the reference tree is split into levels according to the parameter.
Our algorithm is based on bootstrapping the Tango tree of \cite{DHIP07}. We remark that the main difference is that our algorithm is defined by a parameter $k \in [1,\log n]$ (where we use the balanced BST to implement the split/join operations in the preferred paths of Tango trees) whereas the (properly implemented) Tango tree of \cite{DHIP07} is oblivious to the parameter. 

Consider any fixed access sequence $X = (X_1, \ldots, X_m) \in S^m$. W.l.o.g.~assume $n = |S|$ is a power of $2$, and we associate each key in $S$ with an integer in $[n]$. In this section, we will consider a fixed parameter $k$ that is an integer in $[1, \log n]$.

\begin{definition}[perfect binary tree and subtrees]
Let $\cT$ denote the perfect binary tree on $[n]$.\footnote{Since each node of $\cT$ uniquely stores a key in $[n]$, with an abuse of notation, we will use $x$ to denote both a key in $[n]$ and a node in the tree $\cT$.} For each key $x \in [n]$, define $h_x$ to be the depth of the node $x$ in $\cT$ (root has depth $1$).

Divide the tree $\cT$ into $\log n / k$ levels such that the subtrees on each level have height $k$. 

For each key $x \in [n]$, consider the path from the root to the node $x$, and we denote the subtrees encountered along this path as $\cT^{(x)}_1, \cdots, \cT^{(x)}_{\lceil h_x/k \rceil}$.
\end{definition}

Note that each tree $\cT^{(x)}_\ell$ has height $k$.

\begin{definition}[preferred child and preferred path \cite{DHIKP09}]
At any time $i$, for any node $x$ in $\cT$, we define its preferred child to be the left (or right) child if the last access to a node within $x$'s subtree was in the left (or right) subtree of $x$.

Each subtree of $\cT$ can be decomposed into disjoint sets of preferred paths by recursively following the preferred child.
\end{definition}

\begin{definition}[number of changed preferred children]
At any time $i$, and for any $\ell \in [\lceil h_{X_i}/k \rceil]$, define $a^{(X_i)}_{\ell}$ to be the number of nodes in $\cT^{(X_i)}_{\ell}$ whose preferred child changes during access of $X_i$. Define $a^{(X_i)} = \sum_{\ell=1}^{\lceil h_{X_i}/k \rceil} a^{(X_i)}_{\ell}$ to be the number of all nodes in the whole tree whose preferred child changes during access of $X_i$.
\end{definition}

By the definition of the alternation bound in \Cref{def:ib}, we directly have the following lemma.
\begin{lemma}[alternation bound]
$\ALT_{\cT}(X) = \sum_{i=1}^m a^{(X_i)}$.
\end{lemma}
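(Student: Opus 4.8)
The plan is to open up the recursion that defines $\ALT_\cT$, reinterpret each term of the resulting sum as a count of preferred-child changes at a single node, and then exchange the order of the two summations so that the total gets grouped by access rather than by node.

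First I would unroll \Cref{eq:ib}: inducting on the structure of $\cT$, each descent step of the recursion contributes exactly one $\mixValue$ term, evaluated at the current subtree's root, so
\[
\ALT_\cT(X) \;=\; \ALT_\cT(G_X) \;=\; \sum_{v \in \cT} \mixValue\bigl(P^v_\ttL.y,\; P^v_\ttR.y\bigr),
\]
where $P^v_\ttL$ (resp.\ $P^v_\ttR$) denotes the set of accesses $(X_i,i)\in G_X$ whose key $X_i$ lies in the left (resp.\ right) subtree of $v$ in $\cT$. Next I would identify, for a fixed node $v$, the quantity $\mixValue(P^v_\ttL.y, P^v_\ttR.y)$ with the total number of times $v$'s preferred child changes over the run. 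Since $G_X$ has distinct $y$-coordinates (the access times), the string $\mix(P^v_\ttL.y, P^v_\ttR.y)$ lists, in increasing time order, the accesses to keys strictly below $v$, tagged $\ttL$ or $\ttR$ by which subtree of $v$ the key sits in; and $\switches$ of that string counts the consecutive pairs (in this restricted order) landing on opposite sides of $v$. By the companion definitions in \Cref{def:ib}, $v$'s preferred child always points to the side of the most recent access to a strict descendant of $v$, hence it flips while serving $X_i$ precisely when $X_i$ is a strict descendant of $v$ lying on the side opposite the previous such access — exactly at a switch of that string. (Accesses to the key stored at $v$ itself are tagged neither $\ttL$ nor $\ttR$ and, consistently, leave $v$'s preferred child untouched.) Therefore
\[
\mixValue\bigl(P^v_\ttL.y,\; P^v_\ttR.y\bigr) \;=\; \#\bigl\{\, i\in[m] : v\text{'s preferred child changes during access of } X_i \,\bigr\}.
\]

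Finally I would apply Fubini. Only strict ancestors of $X_i$ lie on the root-to-$X_i$ search path and can have their preferred child change while serving $X_i$, and these are exactly the nodes counted by $a^{(X_i)} = \sum_{\ell} a^{(X_i)}_\ell$ over the subtrees $\cT^{(X_i)}_1,\ldots,\cT^{(X_i)}_{\lceil h_{X_i}/k\rceil}$ along that path. Combining the two displays above and swapping the order of summation,
\[
\ALT_\cT(X) \;=\; \sum_{v\in\cT}\#\{i : v\text{'s pref.\ child changes during }X_i\} \;=\; \sum_{i=1}^m \#\{v\in\cT : v\text{'s pref.\ child changes during }X_i\} \;=\; \sum_{i=1}^m a^{(X_i)}.
\]

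I do not expect a genuine obstacle. The only point needing any care is matching the edge-case conventions of the two definitions so that the identity is exact rather than merely up to an additive $O(m)$: namely, that the first access below a node \emph{sets} (does not \emph{change}) its preferred child — which mirrors the fact that $\switches$ never counts the first character — and that an access to the key stored at a node is treated as a non-event on both sides. Both conventions line up, so the equality holds on the nose, which is why the paper can assert it "directly from the definition".
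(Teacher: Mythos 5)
Your proof is correct and is essentially the paper's own (unwritten) argument: the paper simply asserts the lemma follows ``directly'' from \Cref{def:ib}, and what you supply --- unrolling the recursion into a sum of $\mixValue$ terms over nodes, identifying each with the number of preferred-child flips at that node, and swapping the order of summation --- is exactly the double counting being left implicit. Your care about the edge-case conventions (first access sets rather than changes the preferred child; an access to the key at $v$ itself is a non-event) is the right reading of the paper's definitions, and in any case any mismatch there is absorbed by the paper's stated $\pm O(m)$ slack in such definitions.
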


Similar to the original tango tree in \cite{DHIKP09}, we maintain each preferred path as an auxiliary tree. Since we define a preferred path to be a path \emph{within a subtree}, in the auxiliary tree we store the depth of a node \emph{with respect to that subtree} instead of its depth with respect to the whole tree $\cT$. We use the following result from  \cite{DHIKP09}:
\begin{lemma}[auxiliary tree, Section 3.2 of \cite{DHIKP09}]
The auxiliary tree data structure is an augmented BST that stores a preferred path and supports ``search'', ``cut by depth'', and ``join'' operations. Each operation takes $O(\log p + 1)$ time, where $p$ is the total number of nodes in the preferred path.
\end{lemma}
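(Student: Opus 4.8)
Since this is exactly the auxiliary-path structure underlying Tango trees, the plan is to recall (and verify) the construction of \cite{DHIKP09}. We realize the auxiliary tree as a \emph{balanced} rotation-based BST---for instance a red--black tree or weight-balanced tree supporting $\mathrm{split}$ and $\mathrm{join}$ of subtrees in $O(\log p)$ time---whose nodes are exactly the $p$ nodes of the preferred path, keyed by their values in $[n]$, and augmented so that each node stores the maximum, over its BST-subtree, of the (subtree-relative) depths recorded at those nodes. This aggregate is maintainable in $O(1)$ work per rotation, so it does not spoil the time bounds, and a $\mathrm{search}$ is then an ordinary BST search costing $O(\log p)$. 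Storing depths relative to the height-$k$ subtree rather than to all of $\cT$ changes nothing in what follows.

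The only nontrivial ingredient is a structural property of preferred paths. List the path as $v_1, \ldots, v_p$ in increasing reference-tree depth, so that $v_{i+1}$ is a child of $v_i$ in $\cT$ and, for every $j$, all of $v_1, \ldots, v_{j-1}$ are ancestors of $v_j$. Fix a threshold depth $d$ and let $j$ be maximal with $v_j$ at depth at most $d$. If $v_{j+1}$ is (say) the left child of $v_j$, then every one of $v_{j+1}, \ldots, v_p$ lies in $v_j$'s left subtree in $\cT$, hence has key strictly between the key of $v_j$ and the key of the closest ancestor of $v_j$ on its left side; since that ancestor is among $v_1, \ldots, v_{j-1}$, these two delimiters are \emph{consecutive} in key order within the shallow prefix $\{v_1, \ldots, v_j\}$ (or are $\pm\infty$). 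Consequently the deep suffix $\{v_{j+1}, \ldots, v_p\}$ occupies a \emph{contiguous range of ranks} when the path nodes are listed in key order.

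Given this property, the cut-by-depth operation at threshold $d$ is implemented by descending from the root twice using the stored subtree maxima: once to find the leftmost path node $w_\ell$ of depth greater than $d$ (recurse left if the left subtree's maximum exceeds $d$; otherwise stop at the current node if its depth exceeds $d$; otherwise recurse right), and symmetrically once to find the rightmost such node $w_r$. Each descent costs $O(\log p)$, and by the structural property the deep part is exactly the rank interval $[\mathrm{rank}(w_\ell), \mathrm{rank}(w_r)]$, so a three-way split of the balanced BST at these two positions separates the deep path (the middle piece) from the shallow path (the two outer pieces, rejoined), in $O(\log p)$ total. The join operation is the reverse: to splice a bottom path stored in a tree $B$ onto a path stored in a tree $A$, observe that all keys of $B$ lie in a single gap between two consecutive keys of $A$ (by the same argument), so one splits $A$ at that gap and joins, in key order, the left piece of $A$, then $B$, then the right piece of $A$, again in $O(\log p)$. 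The main difficulty, such as it is, lies in isolating and proving the contiguity property above; the remainder is routine balanced-BST bookkeeping with $O(1)$-maintainable subtree maxima.
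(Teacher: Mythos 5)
The paper does not prove this lemma at all: it is imported verbatim as a black box from the Tango-tree construction (Section~3.2 of the cited work), so there is no in-paper argument to compare against. Your reconstruction of that external proof is correct and is the standard one: a balanced BST over the path nodes keyed by value, augmented with subtree depth-maxima, plus the contiguity property that the nodes of depth greater than $d$ are exactly the path nodes lying in the reference subtree of the first deep node and hence occupy a contiguous key interval, which reduces cut-by-depth and join to $O(\log p)$ splits and joins. One small imprecision: the closest left (or right) ancestor of $v_j$ in the reference tree need not lie among $v_1,\ldots,v_{j-1}$, since the preferred path need not start at the root of its subtree; but this does not matter, because the shallow prefix consists of ancestors of $v_{j+1}$ and therefore lies entirely outside the key interval of $v_{j+1}$'s reference subtree, which is all the contiguity argument requires.
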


\paragraph*{Algorithm.} Now we are ready to describe our algorithm. Our data structure is an augmented BST that consists of all the auxiliary trees of all the preferred paths.

When accessing $X_i$, the algorithm follows the BST until it reaches $X_i$. Whenever a preferred child is changed, we use the ``cut by depth'', and ``join'' operations to change the auxiliary trees accordingly.

\begin{lemma}[time of one access]
The time to search for $X_i$ is 
\[
O\Big( (a^{(X_i)} + \frac{\log n}{k}) \cdot (\log k +1 ) \Big).
\]
\end{lemma}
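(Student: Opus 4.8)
The plan is to charge the cost of the access to $X_i$ to the preferred paths the search touches along the root-to-$X_i$ path in $\cT$, handling the blocks $\cT^{(X_i)}_1,\dots,\cT^{(X_i)}_{\lceil h_{X_i}/k\rceil}$ one at a time. First I would record the structural fact underlying all Tango-type analyses: during an access to $X_i$, the only nodes whose preferred child changes are the ancestors of $X_i$ in $\cT$ at which the search descends to the \emph{non}-preferred child, and all such nodes lie on the root-to-$X_i$ path. Inside a single height-$k$ block $\cT^{(X_i)}_\ell$, the search enters at the block's root, which is the top of some preferred path of that block, walks down that preferred path to its lowest node that is still an ancestor of $X_i$, and — unless it has already exited the block — switches to the child pointing toward $X_i$; this switch is precisely a preferred-child change, and the search is now at the top of a new preferred path of the block. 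Iterating, the search visits exactly $a^{(X_i)}_\ell + 1$ preferred paths inside $\cT^{(X_i)}_\ell$. Summing over the $\lceil h_{X_i}/k\rceil \le \frac{\log n}{k}+1$ blocks on the path yields at most $a^{(X_i)} + O\!\left(\frac{\log n}{k}\right)$ preferred-path visits in total.

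Next I would bound the work per visit. Because a preferred path of a block is contained in a subtree of height $k$, it has at most $k$ nodes, so by the auxiliary-tree lemma every ``search'', ``cut by depth'', and ``join'' on its auxiliary tree costs $O(\log k + 1)$. For each preferred path the search touches, the algorithm performs one ``search'' to locate where it leaves the path, and, if it leaves at a preferred-child-change node $v$, a constant number of ``cut by depth''/``join'' operations to detach the now-non-preferred tail below $v$ and splice in the new preferred path below $v$ — exactly the restructuring of the standard Tango tree, here always on paths of length at most $k$. Passing from block $\cT^{(X_i)}_\ell$ to $\cT^{(X_i)}_{\ell+1}$ costs $O(1)$, since the root of $\cT^{(X_i)}_{\ell+1}$ is a fixed child of the node where the search exits $\cT^{(X_i)}_\ell$ and, as in the standard construction, the auxiliary trees are glued so that this child's auxiliary tree is reached immediately. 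Multiplying the $O\!\left(a^{(X_i)} + \frac{\log n}{k}\right)$ auxiliary-tree operations by the $O(\log k + 1)$ cost of each gives the claimed bound.

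The main obstacle is pinning down the correspondence ``the search leaves the current preferred path'' $\Longleftrightarrow$ ``a preferred-child change occurs at that node'' tightly enough that the count of preferred-path visits per block is exactly $a^{(X_i)}_\ell+1$ with no over-counting at block boundaries, and checking that the post-switch restructuring really uses only $O(1)$ auxiliary-tree operations on paths of length at most $k$ — this is where splitting $\cT$ into height-$k$ blocks is essential, since it caps every auxiliary-tree operation at $O(\log k + 1)$ instead of $O(\log\log n)$. The remaining ingredients — the estimate $\lceil h_{X_i}/k\rceil \le \frac{\log n}{k}+1 = O\!\left(\frac{\log n}{k}\right)$, using $k\le\log n$, and the summation over $\ell$ — are routine.
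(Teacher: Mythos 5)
Your proposal is correct and follows essentially the same argument as the paper: within each height-$k$ block the cost is $O\bigl((a^{(X_i)}_{\ell}+1)(\log k+1)\bigr)$ because each auxiliary-tree operation on a path of at most $k$ nodes costs $O(\log k+1)$ and there are $O(1)$ operations per preferred-child change plus one baseline search per block, after which one sums over the $\lceil h_{X_i}/k\rceil \le \log n/k$ blocks. You simply spell out the per-block accounting (the $a^{(X_i)}_{\ell}+1$ preferred-path visits) in more detail than the paper does.
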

\begin{proof}
We first compute the time that the algorithm spends within a subtree $\cT^{(X_i)}_{\ell}$ for any $\ell \in [\lceil h_{X_i}/k\rceil]$. Since $\cT^{(X_i)}_{\ell}$ has height $k$, the cost of each operation of an auxiliary tree is $O(\log k + 1)$. The algorithm needs to make at least one search operation in one auxiliary tree, and apart from this, the algorithm also needs to make $O(1)$ operations whenever a preferred child is changed, so the search time of $X_i$ within the subtree $\cT^{(X_i)}_{\ell}$ is 
\[
O\Big((a^{(X_i)}_{\ell} + 1) (1 + \log k)\Big).
\]

Thus the total search time of $X_i$ is
\[
\sum_{\ell=1}^{\lceil h_{X_i}/k\rceil} O\Big((a^{(X_i)}_{\ell} + 1) (1 + \log k)\Big) = O\Big( (a^{(X_i)} + \frac{\log n}{k}) \cdot (\log k +1 ) \Big),
\]
where we used the fact that $ h_{X_i} \leq \log n$ and $a^{(X_i)} = \sum_{\ell=1}^{\lceil h_{X_i}/k \rceil} a^{(X_i)}_{\ell}$.
\end{proof}

Since $\ALT_{\cT}(X) = \sum_{i=1}^m a^{(X_i)}$, we can easily compute the total access time.
\begin{lemma}[total time]
The total access time of the sequence $X = (X_1, \cdots, X_m)$ is
\[
O\Big( (\ALT_{\cT}(X) + m \cdot \frac{\log n}{k}) \cdot (\log k + 1) \Big).
\]
\end{lemma}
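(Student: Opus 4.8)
The plan is simply to aggregate the per-access bound over the whole sequence; the preceding lemma has already done all the real work. Accessing $X_i$ takes time $O\big((a^{(X_i)} + \f{\log n}{k})(\log k + 1)\big)$, with a constant hidden in the $O(\cdot)$ that is uniform over $i$ (it comes only from the $O(\log k+1)$ cost of a single auxiliary-tree operation, and from the fact that each access performs one search plus $O(1)$ extra operations per changed preferred child in each of the $\lceil h_{X_i}/k\rceil$ relevant subtrees). So first I would sum this estimate over $i = 1, \dots, m$ and pull the common factor $(\log k + 1)$ out of the sum:
\[
\sum_{i=1}^m O\!\left( \Big(a^{(X_i)} + \f{\log n}{k}\Big)(\log k + 1) \right)
= O\!\left( (\log k + 1)\Big( \sum_{i=1}^m a^{(X_i)} + m\cdot\f{\log n}{k} \Big) \right).
\]

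Next I would invoke the ``alternation bound'' lemma, which states precisely that $\ALT_{\cT}(X) = \sum_{i=1}^m a^{(X_i)}$, to rewrite the inner sum; this immediately yields that the total access time is $O\big((\ALT_{\cT}(X) + m\cdot\f{\log n}{k})(\log k + 1)\big)$, as claimed. Finally, since $\cT$ here is the fixed perfect binary tree on $[n]$ and $\ALT(X) \coloneqq \max_{\cT'}\ALT_{\cT'}(X) \ge \ALT_{\cT}(X)$, this bound also upper-bounds the quantity appearing in \Cref{lem:bootstrapping tango}, so the two lemmas together close that loop.

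There is no genuine obstacle at this step: all the substance — decomposing $\cT$ into height-$k$ subtrees, maintaining each preferred path as an auxiliary tree, and charging the within-subtree work to the nodes whose preferred child changes — already lives in the per-access lemma and in the identity $\ALT_{\cT}(X) = \sum_i a^{(X_i)}$. The only mild care needed is to confirm that the constant in the per-access $O(\cdot)$ does not depend on $i$ (it does not, since $n$, $k$, and hence the auxiliary-tree operation cost are global), so that term-by-term summation is legitimate.
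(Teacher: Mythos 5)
Your proof is correct and matches the paper's (very brief) argument exactly: sum the per-access bound over $i$, factor out $(\log k+1)$, and substitute $\ALT_{\cT}(X)=\sum_i a^{(X_i)}$ from the alternation-bound lemma. The closing remark that $\ALT_{\cT}(X)\le\ALT(X)$ (so the bound transfers to \Cref{lem:bootstrapping tango}) is a correct and worthwhile observation that the paper leaves implicit.
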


\end{document}